\documentclass{article}

\usepackage[english]{babel}
\usepackage[letterpaper,top=2cm,bottom=2cm,left=3cm,right=3cm,marginparwidth=1.75cm]{geometry}
\usepackage{amsmath}
\usepackage{amsthm}
\usepackage{natbib}
\usepackage{graphicx}
\usepackage{verbatim}
\usepackage{dsfont}
\usepackage{float}
\usepackage[colorlinks=true, allcolors=blue]{hyperref}
\usepackage[ruled]{algorithm2e}
\usepackage{appendix}
\usepackage{bm}
\usepackage{enumerate}
\usepackage{lipsum}

\newcommand{\xvec}{\bm{x}}
\newcommand{\yvec}{\bm{y}}
\newcommand{\zvec}{\bm{z}}
\newcommand{\uvec}{\bm{u}}
\newcommand{\nuvec}{\bm{\nu}}
\newcommand{\fvec}{\bm{f}}
\newcommand{\deltavec}{\bm{\delta}}
\newcommand{\kappavec}{\bm{\kappa}}
\newcommand{\alphavec}{\bm{\alpha}}
\newcommand{\betavec}{\bm{\beta}}
\newcommand{\nullvec}{\bm{0}}
\newcommand{\epsilonvec}{\bm{\epsilon}}
\newcommand{\varphivec}{\bm{\varphi}}

\newcommand{\Amat}{\bm{A}}

\newcommand{\Dmat}{\bm{D}}
\newcommand{\Emat}{\bm{E}}

\newcommand{\Hmat}{\bm{H}}
\newcommand{\Imat}{\bm{I}}

\newcommand{\Kmat}{\bm{K}}

\newcommand{\Pmat}{\bm{P}}
\newcommand{\Qmat}{\bm{Q}}
\newcommand{\Rmat}{\bm{R}}
\newcommand{\Smat}{\bm{S}}
\newcommand{\Tmat}{\bm{T}}
\newcommand{\Umat}{\bm{U}}

\newcommand{\Wmat}{\bm{W}}
\newcommand{\Xmat}{\bm{X}}

\newcommand{\Zmat}{\bm{Z}}

\newcommand{\Gammamat}{\bm{\Gamma}}

\newcommand{\Lambdamat}{\bm{\Lambda}}
\newcommand{\Phimat}{\bm{\Phi}}

\newcommand{\nullmat}{\bm{0}}

\newtheorem{theorem}{Theorem}
\newtheorem{lemma}{Lemma}
\newtheorem{definition}{Definition}
\newtheorem{remark}{Remark}
\newtheorem{proposition}{Proposition}

\title{Truly optimal low rank thin plate spline smoothing using a truncated Demmler-Reinsch basis}
\author{Paul Bach, Berlin, Germany}

\begin{document}
\maketitle
\begin{abstract} Thin plate splines are highly attractive smoothers. However, they have cubic computational cost, which severely limits their use in practice. As a remedy, Wood (2003) suggested thin plate regression splines (TPRS), which provide a low rank approximation. The key step of the TPRS approximation is a truncated eigendecomposition of the radial basis function (RBF) design matrix. However, as Wood (2003) writes, the optimality of the TPRS approximation is a slightly weak one. This is because the RBF coefficients are subject to orthogonality constraints and the TPRS approximation is only optimal if these constraints are ignored. To address this shortcoming, we suggest a slightly different low rank approximation. The suggested approximation is based on a truncated Demmler-Reinsch basis (TDRB), which provides a best low rank approximation of the smoother matrix in terms of Frobenius and spectral norm. We prove that the TDRB smoother achieves the optimal rate of convergence and suggest an efficient algorithm for its construction. This algorithm is based on a truncated Karhunen-Lo\`eve (KL) expansion of the equivalent Bayesian smoothness prior and it has the same computational cost as required for TPRS. We demonstrate the applicabilty of our approach through simulations and a real data example. We find that the performance is very similar to that of TPRS but the suggested approach has some advantages. 
\end{abstract}

Keywords: Approximation; Gaussian process; Karhunen-Lo\`eve; penalized splines; principal component analysis; radial basis function; regression; reproducing kernel.

\section{Low rank thin plate spline smoothing}
Let $(\xvec_i,y_i),\ i=1,\dots,n,$ be some data points with distinct inputs $\xvec_i\in\mathds{R}^d$ of dimension $d\geq 1$ and real-valued outputs $y_i\in\mathds{R}$. Following \citet{duchon_splines_1977,wahba_spline_1990} the thin plate spline of integer order $m>d/2$ is defined as a minimizer of the penalized sum of squares
\begin{align}\label{PenalizedSumSquares}
   \widehat{f}_\lambda \in  \underset{f\in \text{BL}^m(\mathds{R}^d)}{\text{argmin}}\left\lbrace 1/n \sum_{i=1}^n (y_i-f(\xvec_i))^2 + \lambda J_{m}(f)\right\rbrace,
\end{align}
where $J_{m}(f)=\sum_{|\nuvec|=m} \frac{m!}{\nuvec!}\int_{\mathds{R}^d} \{D^{\nuvec} f(\xvec)\}^2 d\xvec$
is a roughness penalty with multi-index $\nuvec=(\nu_1,\dots,\nu_d)$. Moreover, $\lambda>0$ is a smoothing parameter and $\text{BL}^m(\mathds{R}^d)$ is the Beppo Levi space of order $m$. This is a Sobolev-type function space of continuous functions whose $m$-th order partial derivatives (in the distributional sense) are square integrable, i.e.,
\begin{align}\label{BeppoLevi}
    \text{BL}^m(\mathds{R}^d)=\{f\in C(\mathds{R}^d):  D^{\nuvec} f \in L^2(\mathds{R}^d),\ |\nuvec|=m\};
\end{align}
see \cite{meinguet1979multivariate} for further details. The null space of the penalty $J_m(\cdot)$ is the space of $d$-variate polynomials of degree $m-1$ or less, which we denote by $\mathcal{P}_{m-1}$. The dimension of $\mathcal{P}_{m-1}$ is $M=\binom{d+m-1}{m-1}$ and it is well known that there exists a unique minimizer $\widehat{f}_\lambda$ if the $\xvec_i$ are such that least squares regression onto $\mathcal{P}_{m-1}$ has a unique solution~\citep{wahba_spline_1990}. This condition is easily satisfied and will be assumed throughout. For $d=2$, for instance, the $\xvec_i\in\mathds{R}^2$ must not lie on a line.
 
To compute the thin plate spline, RBFs are commonly used \citep[see, e.g.,][]{green_nonparametric_1994}. To this end, let 
for $\xvec\in \mathds{R}^d$ and $j=1,\dots,n,$ 
\begin{align*}
    e_j(\xvec)=\begin{cases}
     \dfrac{(-1)^{m+1+d/2}}{2^{2m-1}\pi^{d/2}(m-1)!(m-d/2)!}\|\xvec-\xvec_j\|^{2m-d}\log\|\xvec-\xvec_j\|, & \text{if $d$ is even},\\\\
    \dfrac{\Gamma(d/2-m)}{2^{2m}\pi^{d/2}(m-1)!}\|\xvec-\xvec_j\|^{2m-d}, & \text{if $d$ is odd},
  \end{cases}
\end{align*}
where $\|\cdot\|$ is the Euclidean norm. 
Let further $\{t_j\}_{j=1}^M$ be a basis of $\mathcal{P}_{m-1}$ such as the monomials. With this, the thin plate spline can be represented in the form
\begin{align*}
    \widehat{f}_\lambda (\xvec) =\sum_{j=1}^M t_j(\xvec)\widehat{\alpha}_j + \sum_{j=1}^n e_j(\xvec)\widehat{\delta}_j,\ \xvec\in \mathds{R}^d,
\end{align*}
 with coefficient vectors $\widehat{\alphavec}$ and $\widehat{\deltavec}$ obtained as solutions to the constraint optimization problem
\begin{align*}
   \underset{(\alphavec,\deltavec)\in\mathds{R}^{M+n}}{\text{minimize}}\quad \left\lbrace 1/n\ \|\yvec-\Tmat\alphavec-\Emat\deltavec\|^2+\lambda \deltavec^t \Emat \deltavec\right\rbrace\  \quad \text{subject to} \quad \Tmat^t\deltavec=\nullvec.
\end{align*}
Thereby, $\yvec=(y_1,\dots,y_n)^t$ is the response vector, $\Emat=\{e_j(\xvec_i)\}$ is the $n\times n$ RBF design matrix and 
$\Tmat=\{t_j(\xvec_i)\}$ is the $n\times M$ polynomial design matrix.
Two important special cases are the univariate and the bivariate thin plate spline with second order penalization. In the former case $\Tmat$ has rows $(1,x_i)$ and $\Emat$ has entries $1/12\ |x_i-x_j|^3$. In the latter case, $\Tmat$ has rows $(1,x_{i1},x_{i2})$ and $\Emat$ has entries $1/(8\pi)\ \|\xvec_i-\xvec_j\|^2\log \|\xvec_i-\xvec_j\|,\ i,j=1,\dots,n$. 

Thin plate splines are highly attractive smoothers: They work well for different input dimensions and achieve the optimal rate of convergence \citep{utreras_convergence_1988}. From a practical perspective, one does not need to worry about an adequate number of knots and their position.
However, thin plate splines have cubic computational cost $O(n^3)$, which severely limits their use. 

A simple approach to reduce the computational cost is to use a random subset of the inputs $\xvec_i$ as knots \citep{kim2004smoothing}.
Although one can still achieve the optimal rate of convergence, this approach yields suboptimal performance in practice. 
Seeking for an optimal low rank approximation, \citet{wood_thin_2003} suggested thin plate regression splines (TPRS). The key idea is to restrict the RBF coefficients $\deltavec$ to a low dimensional subspace of $\mathds{R}^n$. More specifically,
one first computes a truncated eigendecomposition of the RBF design matrix $\Emat\approx \Umat_k\Lambdamat_k\Umat_k^t$, where $k$ is the number of largest magnitude eigenvalues. Then one restricts $\deltavec$ to $\text{span}(\Umat_k)\subseteq \mathds{R}^n$ before realizing the orthogonality constraints $\Tmat^t\deltavec=\nullvec$. Typically $k$ is chosen much smaller than $n$, so that the restriction to $\text{span}(\Umat_k)$ reduces the computational cost significantly. TPRS are the default smoother in Wood's popular \texttt{R} package \texttt{mgcv} and they work well in practice. However, the truncated eigendecomposition only provides an optimal approximation of $\Emat$ if one ignores the constraints $\Tmat^t\deltavec=\nullvec$. Therefore, Wood speaks of \emph{weak optimality}; see \citet[][end of Section 2.1 and Section 5]{wood_thin_2003} as well as \citet[p. 259]{wood_generalized_2017}.

The main purpose of this paper is to introduce a slightly different low rank approximation of thin plate splines. The suggested approximation is closely related to TPRS but can be regarded as a \emph{truly optimal} as the constraints $\Tmat^t\deltavec=\nullvec$ are incorporated. In summary, the key contributions of this paper are as follows:

\begin{itemize}
    \item We introduce a low rank approximation of thin plate splines, which we refer to as TDRB smoother. We prove that the TDRB smoother provides an optimal approximation and that it achieves the optimal rate of convergence.
    \item We devise a computationally efficient algorithm of cost $O(n^2k)$ to construct the TDRB smoother. This is exactly the same computational cost as required to set up TPRS \citep{wood_thin_2003}. 
    \item We provide empirical evidence demonstrating the competitive performance for multivariate smoothing and for generalized additive models (GAMs).
\end{itemize}

The remainder of this paper is structured as follows: Section~\ref{sec:tDRB smoother} introduces the TDRB smoother, proves its optimality and that it achieves the optimal rate of convergence. Section~\ref{sec:Construction} suggests a computationally efficient algorithm to construct the TDRB smoother,
while Section~\ref{sec:EmpiricalEvidence} provides empirical evidence. Section~\ref{sec:Discussion} closes with a discussion. Appendix~\ref{sec:Appendix} contains proofs and explains how to best handle replicated inputs. 


\section{The TDRB smoother}\label{sec:tDRB smoother}
This section introduces the TDRB smoother. First, we need to introduce additional notation: We use $(\cdot,\cdot)_n$ for the empirical or design (semi-) inner product, i.e., $(f,g)_n=1/n\sum_{i=1}^n f(\xvec_i)g(\xvec_i)$ and $\|\cdot\|_n$ for the associated (semi-) norm, i.e., $\|f\|^2_n=1/n\sum_{i=1}^n f(\xvec_i)^2$. In addition, we use $\mathcal{B}_m$ to denote the bilinear form that is associated with the roughness penalty, i.e.,
\begin{align}\label{RoughnessBilinear}
\mathcal{B}_m(f,g)=\sum_{|\nuvec|=m} \frac{m!}{\nuvec!}\int_{\mathds{R}^d} \{D^{\nuvec} f(\xvec)\}\{D^{\nuvec} g(\xvec)\} d\xvec,\quad f,g\in \text{BL}^m(\mathds{R}^d).    
\end{align}
Finally, we use $\mathcal{S}_n=\{f=\sum_{j=1}^M t_j\alpha_j + \sum_{j=1}^n e_j\delta_j : \Tmat^t\deltavec=\nullvec\}\subset \text{BL}^m(\mathds{R}^d)$ to denote the $n$-dimensional space of natural thin plate splines, which contains the thin plate spline estimate $\widehat{f}_\lambda$. 

Following the ideas of \cite{demmler_oscillation_1975, speckman_spline_1985} for smoothing splines and \citet[][Appendix A]{bach2026posterior} for Bayesian penalized B-splines, 
we define a Demmler-Reinsch basis as a basis of $\mathcal{S}_n$ that is orthonormal with respect to the design inner product and orthogonal with to respect $\mathcal{B}_{m}$. Definition~\ref{Def1} makes this precise.
\begin{definition}\label{Def1}
    A basis $\{\varphi_j\}_{j=1}^n$ of the natural thin plate splines $\mathcal{S}_n$ is a Demmler-Reinsch basis if  
\begin{align*}
    (\varphi_j,\varphi_l)_n=\delta_{j,l}\quad \text{and}\quad \mathcal{B}_{m}(\varphi_j,\varphi_l)=\gamma_j\delta_{j,l},
\end{align*}
for $j,l=1,\dots,n$, where $\delta_{j,l}$ is Kronecker's delta and $0=\gamma_1=\dots=\gamma_M < \gamma_{M+1}\leq \dots \leq \gamma_n$. 
 \end{definition}

To reduce the computational burden of smoothing, we suggest to restrict the minimization of the penalized sum of squares~\eqref{PenalizedSumSquares} to the subspace $\text{span}(\{\varphi_j\}_{j=1}^k)\subseteq \mathcal{S}_n$ that is spanned by the first $k$ basis functions of a Demmler-Reinsch basis with $k\in\{M+1,\dots,n\}$. We denote the resulting truncated Demmler-Reinsch basis (TDRB) smoother by $\widehat{f}_\lambda^k$. In which sense is the TDRB smoother an optimal approximation of the thin plate spline?
Proposition~\ref{Prop:FrequentistOptimality} below provides an answer.

\begin{proposition}~\label{Prop:FrequentistOptimality}
The smoother matrix $\Smat_\lambda^k$ of the TDRB smoother $\widehat{f}_\lambda^k$ is a best rank $k$ approximation of the smoother matrix $\Smat_\lambda$ of the thin plate spline $\widehat{f}_\lambda$ in terms of Frobenius norm and spectral norm.    
\end{proposition}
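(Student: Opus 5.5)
We reduce everything to linear algebra in the Demmler-Reinsch coordinates. Let $\Phimat=\{\varphi_j(\xvec_i)\}$ be the $n\times n$ matrix of basis evaluations. By Definition~\ref{Def1}, $\Phimat^t\Phimat/n=\Imat$, so $\Qmat=\Phimat/\sqrt{n}$ is orthogonal. Every $f\in\mathcal{S}_n$ can be written as $f=\sum_j\varphi_j\beta_j$. For such $f$, the penalized sum of squares~\eqref{PenalizedSumSquares} becomes
\begin{align*}
1/n\,\|\yvec-\Phimat\betavec\|^2+\lambda\,\betavec^t\Gammamat\betavec,\qquad \Gammamat=\mathrm{diag}(\gamma_1,\dots,\gamma_n).
\end{align*}
Here we use $\mathcal{B}_m(\varphi_j,\varphi_l)=\gamma_j\delta_{j,l}$ and $J_m(f)=\mathcal{B}_m(f,f)$.

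Since $\widehat f_\lambda\in\mathcal{S}_n$, restricting the minimization to $\mathcal{S}_n$ loses nothing. Differentiating gives $\widehat\betavec=(\Imat+\lambda\Gammamat)^{-1}\Phimat^t\yvec/n$, and hence
\begin{align*}
\Smat_\lambda=\Qmat\,\Dmat\,\Qmat^t,\qquad \Dmat=\mathrm{diag}\{(1+\lambda\gamma_j)^{-1}\}.
\end{align*}
This is an eigendecomposition of $\Smat_\lambda$. Because the $\gamma_j$ are nondecreasing, the eigenvalues $(1+\lambda\gamma_j)^{-1}\in(0,1]$ are nonincreasing in $j$. Since $\Smat_\lambda$ is symmetric positive definite, these eigenvalues are also its singular values.

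Next we treat the TDRB smoother, which restricts to coefficients $\betavec=(\betavec_k^t,\nullvec^t)^t$. Let $\Qmat_k$ be the first $k$ columns of $\Qmat$. The objective becomes $1/n\|\yvec-\sqrt n\Qmat_k\betavec_k\|^2+\lambda\betavec_k^t\Gammamat_k\betavec_k$. Using $\Qmat_k^t\Qmat_k=\Imat_k$, the same computation gives
\begin{align*}
\Smat_\lambda^k=\Qmat_k\Dmat_k\Qmat_k^t=\sum_{j=1}^k (1+\lambda\gamma_j)^{-1}\,\bm{q}_j\bm{q}_j^t .
\end{align*}
This is exactly the truncated SVD of $\Smat_\lambda$ retaining the $k$ largest singular values. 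By the Eckart--Young--Mirsky theorem, it is a best rank-$k$ approximation in both the Frobenius and the spectral norm. The attained errors are $\big(\sum_{j>k}(1+\lambda\gamma_j)^{-2}\big)^{1/2}$ and $(1+\lambda\gamma_{k+1})^{-1}$, respectively.

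The main obstacle is the first step: justifying that $\mathcal{S}_n$ has dimension $n$ and that the evaluation map $\mathcal{S}_n\to\mathds{R}^n$ is a bijection. This is what makes $\Phimat$ square and invertible, so that $(\cdot,\cdot)_n$ is a genuine inner product on $\mathcal{S}_n$ and a Demmler-Reinsch basis exists. It follows from the standard interpolation property of natural thin plate splines under the unisolvency assumption on the $\xvec_i$. Existence of the basis itself then follows from simultaneous diagonalization of the two symmetric forms: the design form is positive definite, and $\mathcal{B}_m$ is positive semidefinite with null space $\mathcal{P}_{m-1}$ of dimension $M$. This explains the $M$ zero values of $\gamma_j$.

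A minor point concerns ties $\gamma_k=\gamma_{k+1}$. In that case the best approximation is not unique, but $\Smat_\lambda^k$ remains one of the minimizers, which is what the proposition claims.
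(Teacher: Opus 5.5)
Your proposal is correct and follows the same route as the paper. You write both smoother matrices in the Demmler--Reinsch basis as $\sum_j \varphivec_j\varphivec_j^t/(n+n\lambda\gamma_j)$, summed over all $n$ terms for $\Smat_\lambda$ and over the first $k$ for $\Smat_\lambda^k$, recognize the latter as the truncated SVD of the former, and apply Eckart--Young; your explicit normal-equation derivation and your remarks on existence of the basis and on ties fill in details the paper leaves implicit.
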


\begin{proof}
    See Appendix~\ref{sec:ProofTheo1}.
\end{proof}
Next, we investigate the asymptotic behavior of the TDRB smoother. We make the following assumptions: 
\begin{itemize}
    \item[A1)] The data is generated from the nonparametric regression model $y_i=f_0(\xvec_i)+\epsilon_i,\ i=1,\dots,n,$ where the residuals $\epsilon_i$ are uncorrelated with mean zero and variance $\sigma^2>0$.  
    \item[A2)] The true function $f_0$ is Sobolev-regular of integer order $m>d/2$ on a bounded domain $\Omega\subset\mathds{R}^d$, which has a Lipschitz boundary and satisfies a uniform cone condition.  
    \item[A3)] The $\xvec_i,\ i=1,\dots,n,$ are distinct and such that least squares regression onto $\mathcal{P}_{m-1}$ has a unique solution. In addition, the $\xvec_i,\ i=1,\dots,n,$ are quasi-uniform on $\Omega$, i.e., there exists a constant $B>0$ such that
   ${\sup_{\xvec\in \Omega} \inf_{i=1,\dots,n} \|\xvec-\xvec_i\|}/{\min_{i\neq j} \|\xvec_i-\xvec_j\|}\leq B.$
   \item[A4)] The smoothing parameter decays at rate $\lambda_n \asymp n^{-2m/(2m+d)}$. 
   \item[A5)] The number of basis functions grows at rate $k_n \asymp n^\delta$ with $\delta\in[d/(2m+d),n]$.
\end{itemize}
Assumptions A1)--A4) have been used by \cite{utreras_convergence_1988} to establish asymptotic results for thin plate splines, A5) is a new assumption for the TDRB smoother.

\begin{theorem}~\label{TheoAsymptotics}
 Under assumptions A1)--A5), the TDRB smoother achieves the optimal rate of convergence, i.e., $\mathds{E}_0 \|\widehat{f}_{\lambda_n}^{k_n}-f_0\|_n^2=O(n^{-2m/(2m+d)})$.  
\end{theorem}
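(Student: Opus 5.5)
We work entirely in the Demmler-Reinsch coordinates, where both smoothers are diagonal, and split the risk into a bias part and a variance part. Writing $\Phimat=\{\varphi_j(\xvec_i)\}$, Definition~\ref{Def1} gives $\Phimat^t\Phimat/n=\Imat$. The thin plate spline smoother matrix is
\begin{align*}
\Smat_\lambda=n^{-1}\Phimat\,\text{diag}\{(1+\lambda\gamma_j)^{-1}\}\Phimat^t .
\end{align*}
The TDRB smoother matrix $\Smat_\lambda^k$ is the same expression with the weights for $j>k$ set to zero; this is the representation behind Proposition~\ref{Prop:FrequentistOptimality}. Let $\fvec_0=(f_0(\xvec_1),\dots,f_0(\xvec_n))^t$ and let $\beta_j=(f_0,\varphi_j)_n$ be its coordinates. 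Under A1),
\begin{align*}
\mathds{E}_0\|\widehat{f}^{k}_{\lambda}-f_0\|_n^2=\sum_{j=1}^k\Big(\frac{\lambda\gamma_j}{1+\lambda\gamma_j}\Big)^2\beta_j^2+\sum_{j=k+1}^n\beta_j^2+\frac{\sigma^2}{n}\sum_{j=1}^k\frac{1}{(1+\lambda\gamma_j)^2}.
\end{align*}
The first and third terms are each bounded by the corresponding terms for the full thin plate spline, because truncation only removes nonnegative summands (respectively shrinks weights in the bias sum). Utreras' analysis under A1)--A4) bounds these full-spline terms by $O(n^{-2m/(2m+d)})$. Alternatively, they can be bounded directly, using the eigenvalue growth $\gamma_j\asymp j^{2m/d}$ established below and the bias bound $\sum_j\lambda\gamma_j\beta_j^2/(1+\lambda\gamma_j)\le\lambda\,\mathcal{B}_m(\tilde f_0,\tilde f_0)$, where $\tilde f_0\in\mathcal{S}_n$ interpolates $f_0$. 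The variance is then at most $\sigma^2 n^{-1}\sum_j(1+\lambda j^{2m/d})^{-2}\asymp \sigma^2 n^{-1}\lambda^{-d/(2m)}=O(n^{-2m/(2m+d)})$ under A4).

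The only genuinely new term is the truncation bias $\sum_{j>k}\beta_j^2$. Since the $\gamma_j$ are nondecreasing,
\begin{align*}
\sum_{j>k}\beta_j^2\le \gamma_{k+1}^{-1}\sum_{j>k}\gamma_j\beta_j^2\le \gamma_{k+1}^{-1}\,\mathcal{B}_m(\tilde f_0,\tilde f_0).
\end{align*}
Here $\tilde f_0=\sum_j\beta_j\varphi_j$ is the natural thin plate spline interpolant of $f_0$ at the design points, and $\mathcal{B}_m(\tilde f_0,\tilde f_0)=\sum_j\gamma_j\beta_j^2$. By the minimal-seminorm property of thin plate spline interpolation, $\mathcal{B}_m(\tilde f_0,\tilde f_0)\le J_m(Ef_0)$, where $E$ is a Sobolev extension operator from $\Omega$ to $\mathds{R}^d$. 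This extension exists by the Lipschitz boundary in A2), so the energy is $O(1)$. The problem therefore reduces to the lower bound $\gamma_{k_n+1}\gtrsim n^{-2m/(2m+d)\cdot(-1)}$, that is, $\gamma_{k+1}^{-1}=O(n^{-2m/(2m+d)})$.

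The main obstacle is the eigenvalue lower bound. The $\gamma_j$ are the eigenvalues of the penalty relative to the empirical norm on $\mathcal{S}_n$, so the Courant--Fischer characterization applies:
\begin{align*}
\gamma_{k+1}=\max_{\dim V=k}\min_{f\in\mathcal{S}_n,\,f\perp_n V}\mathcal{B}_m(f,f)/\|f\|_n^2 .
\end{align*}
The plan is to choose $V$ to be the span of thin plate interpolants of a polynomial partition of unity (a local-polynomial quasi-interpolant) on a quasi-uniform mesh of $\approx k$ cells of width $h\asymp k^{-1/d}$. A Sobolev sampling inequality of the kind used by Utreras, and valid under the cone condition and quasi-uniformity A3), gives $\|f\|_n^2\lesssim h^{2m}\mathcal{B}_m(f,f)$ whenever $f$ is $n$-orthogonal to the local polynomials. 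This yields $\gamma_{k+1}\gtrsim k^{2m/d}$. Making this step rigorous is delicate: it needs a discrete Bramble--Hilbert/Poincaré argument on each cell that holds uniformly in $n$. The same argument also provides the upper growth $\gamma_j\lesssim j^{2m/d}$ used for the variance bound, via a test space of bump functions.

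With $\gamma_{k+1}\gtrsim k^{2m/d}$ in hand, $k_n\asymp n^\delta$ with $\delta\ge d/(2m+d)$ gives $\gamma_{k_n+1}^{-1}\lesssim n^{-2m\delta/d}\le n^{-2m/(2m+d)}$. This holds for every admissible $\delta$; note that A5) presumably intends $\delta\le 1$, with $k_n\le n$. Combining the three terms proves Theorem~\ref{TheoAsymptotics}.
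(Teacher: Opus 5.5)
Your proposal is correct and follows the paper's skeleton. Both use the same exact bias--variance identity in Demmler--Reinsch coordinates. Both bound the truncation bias by $\sum_{j>k_n}\beta_j^2\le\gamma_{k_n+1}^{-1}\sum_j\gamma_j\beta_j^2$. Both control $\sum_j\gamma_j\beta_j^2$ by minimal-seminorm interpolation plus an extension bound; the paper uses Duchon's extension into $\text{BL}^m(\mathds{R}^d)$ instead of a Stein extension, which is immaterial.

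The genuine difference is the step you call the main obstacle, which the paper never proves directly. It changes basis from the Demmler--Reinsch basis with transition matrix $\Phimat_n^t/n$. This yields the cardinal (Lagrange) basis of $\mathcal{S}_n$, whose penalty matrix satisfies $n\Kmat_{card}=(\Phimat_n/\sqrt{n})\Gammamat_n(\Phimat_n^t/\sqrt{n})$. So the $\gamma_{j,n}$ are exactly the eigenvalues of $n\Kmat_{card}$, and Utreras (1988, Theorem 5.3) already bounds these above and below by multiples of $j^{2m/d}$ under A2)--A3).

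Your Courant--Fischer argument with local polynomials on $\asymp k$ cells is a sound, self-contained reconstruction of the lower half of that theorem. Discrete orthogonality gives $\sum_{\xvec_i\in Q}f(\xvec_i)^2\le\sum_{\xvec_i\in Q}(f-p)(\xvec_i)^2$ for every polynomial $p$ of degree $<m$. Separation gives $\#(Q\cap\{\xvec_i\})/n\lesssim h^d$, and a scaled Sobolev embedding plus Bramble--Hilbert finish it. As written, though, it is only a sketch, and the identification above lets you replace it with a citation.

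Your shortcut for the first and third terms is also valid: you bound them by the corresponding full thin plate spline terms, which Utreras' rate controls. With it, the only eigenvalue information you need is the lower bound at index $k_n+1$. The paper instead computes these terms directly, using its eigenvalue lemma, $x/(1+x)^2\le 1/4$, and Gu's Lemma 9.1.

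One small slip: the variance bound needs the lower bound $\gamma_j\gtrsim j^{2m/d}$, not the upper one. So the bump-function argument for $\gamma_j\lesssim j^{2m/d}$ is not needed for the theorem at all; it would matter only for sharpness. Your reading of A5) as $\delta\le 1$ is right.
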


\begin{proof}
    See Appendix~\ref{sec:ProofAsymptotics}.
\end{proof}

Theorem~\ref{TheoAsymptotics} shows that with adequate penalization the TDRB smoother achieves the optimal rate of convergence. Thereby, the number of basis functions $k_n$ can grow at a much slower rate than the number of observations $n$. This allows for computational savings without loss of performance.
However, for the approach to be actually useful in practice, we need a computationally efficient algorithm to construct the TDRB smoother. This is the topic of the following Section~\ref{sec:Construction}.


\section{Computationally efficient Bayesian construction}\label{sec:Construction}
This section introduces a computationally efficient construction of the TDRB smoother. The construction is motivated by the reparametrization of \citet[][Section 2.1]{scheipl_spike-and-slab_2012} for Bayesian penalized B-splines and relies on a discrete KL expansion of the equivalent Bayesian smoothness prior.

The thin plate spline $\widehat{f}_\lambda$ can be obtained as posterior mode in the Gaussian response model
$y_i=f(\xvec_i)+\epsilon_i,\ \epsilon_i\overset{iid}{\sim} N(0,\sigma^2),\ i=1,\dots,n,$
when using the prior
\begin{align}\label{BayesianSmoothnessPrior}
f=f_{poly}+f_{gp}=\sum_{j=1}^M t_j\alpha_j + \sum_{j=1}^{n-M}z_ju_j .   
\end{align} 
Thereby, $\{t_j\}_{j=1}^M$ is a basis of the polynomials $\mathcal{P}_{m-1}$ (as before) and $\{z_j\}_{j=1}^{n-M}$ is a basis of the orthogonal complement $\mathcal{P}_{m-1}^{\perp}=\{f\in \mathcal{S}_n:\sum_{i=1}^n f(\xvec_i)p(\xvec_i)=0,\ p \in\mathcal{P}_{m-1}\}$. Moreover, the coefficient vectors follow an improper uniform prior $\alphavec\sim 1$ and a multivariate normal prior $\uvec\sim N(\nullvec,\tau^2 \Kmat_z^{-1})$, where $\Kmat_z=\{\mathcal{B}_{m}(z_j,z_l)\}$ is the roughness penalty matrix in terms of the basis $\{z_j\}_{j=1}^{n-M}$.  

The prior~\eqref{BayesianSmoothnessPrior} is a sum of an 
improper uniform prior on $\mathcal{P}_{m-1}$ and a Gaussian process (GP) prior on $\mathcal{P}^\perp_{m-1}$. The GP has mean function zero and covariance function $c(\xvec,\widetilde{\xvec})=\tau^2 k(\xvec,\widetilde{\xvec})$ with $k(\xvec,\widetilde{\xvec})=\zvec(\xvec)^t \Kmat_z^{-1}\zvec(\widetilde{\xvec})$, where $\zvec(\cdot)=(z_1(\cdot),\dots,z_{n-M}(\cdot))^t$ is the column vector of basis function evaluations.
For smoothing variance $\tau^2=\sigma^2/(\lambda n)$ the posterior mode coincides with $\widehat{f}_\lambda$, which justifies to regard~\eqref{BayesianSmoothnessPrior} as equivalent Bayesian smoothness prior or thin plate spline prior \citep[cf.][Section 6]{silverman_aspects_1985}. Proposition~\ref{PropositionConstructionDRB} is important for our construction of the TDRB smoother.

\begin{proposition}\label{PropositionConstructionDRB}
    \begin{enumerate}[i)]
    \item $k(\xvec,\widetilde{\xvec})$ is the reproducing kernel of the finite-dimensional reproducing kernel Hilbert space $\mathcal{P}_{m-1}^\perp$ with inner product $\mathcal{B}_m$, i.e., $\mathcal{B}_m(k(\cdot,\widetilde{\xvec}),f)=f(\widetilde{\xvec}),\ f\in \mathcal{P}_{m-1}^\perp,\ \widetilde{\xvec}\in \mathds{R}^d$.
     \item The $n\times n$ kernel matrix $\Kmat=\{k(\xvec_i,\xvec_j)\}$ can be expressed in the form $\Kmat=\Pmat\Emat\Pmat$, where $\Pmat=\Imat_n-\Tmat(\Tmat^t\Tmat)^{-1}\Tmat^t$ is the orthogonal projector onto $span(\Tmat)^\perp$ with $n\times n$ identity matrix $\Imat_n$. 
     \item If $\{\varphi_j\}_{j=1}^n$ is a Demmler-Reinsch basis, then the vectors of function evaluations of the nonpolynomial basis functions $\{\varphi_j\}_{j=M+1}^n$ are eigenvectors of $\Kmat$. Conversely, the nonpolynomial basis functions of a Demmler-Reinsch basis can be constructed through an eigendecomposition of $\Kmat$.
    \end{enumerate}
\end{proposition}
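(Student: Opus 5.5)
The plan is to prove the three parts in order, using only finite-dimensional linear algebra on $\mathcal{P}_{m-1}^\perp$ together with the classical identity for natural thin plate splines. For $f=\sum_j t_j\alpha_j+\sum_j e_j\delta_j$ and $g=\sum_j t_j\beta_j+\sum_j e_j\eta_j$ in $\mathcal{S}_n$, with $\Tmat^t\deltavec=\Tmat^t\bm{\eta}=\nullvec$, this identity reads $\mathcal{B}_m(f,g)=\deltavec^t\Emat\bm{\eta}$. It is the reason the penalty appears as $\deltavec^t\Emat\deltavec$ in the constrained problem of Section~1 (Duchon, Meinguet), and I will take it as known. A consequence is $\mathcal{B}_m(f,e_j)$-type reproduction: for $f\in\mathcal{S}_n$ and $g$ as above, $\mathcal{B}_m(f,g)=\sum_i \eta_i f(\xvec_i)$, since the polynomial part of $f$ is annihilated by $\Tmat^t\bm{\eta}=\nullvec$.

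For part i), $\mathcal{B}_m$ is positive definite on $\mathcal{P}_{m-1}^\perp$. Its null space is $\mathcal{P}_{m-1}$, and $\mathcal{P}_{m-1}^\perp\cap\mathcal{P}_{m-1}=\{0\}$ because a polynomial $p$ that is design-orthogonal to all polynomials satisfies $\sum_i p(\xvec_i)^2=0$; A3) then forces $p=0$. Hence $\Kmat_z$ is invertible and $(\mathcal{P}_{m-1}^\perp,\mathcal{B}_m)$ is a finite-dimensional Hilbert space in which point evaluation is trivially continuous. Now write $f=\zvec^t\bm{a}$. Then
\[
\mathcal{B}_m(k(\cdot,\widetilde{\xvec}),f)=\zvec(\widetilde{\xvec})^t\Kmat_z^{-1}\Kmat_z\bm{a}=f(\widetilde{\xvec}).
\]
We also have $k(\cdot,\widetilde{\xvec})\in\mathcal{P}_{m-1}^\perp$, since it is a linear combination of the $z_j$. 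This proves i).

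For part ii), the approach is to identify $\mathcal{P}_{m-1}^\perp$ explicitly and use uniqueness of the reproducing kernel. For $\bm{v}\in\mathds{R}^n$, put $h_{\bm{v}}=\sum_j e_j(\Pmat\bm{v})_j-\sum_j t_j\bm{c}_j$, where $\bm{c}$ is chosen so that the evaluation vector $\Emat\Pmat\bm{v}-\Tmat\bm{c}$ is orthogonal to $\mathrm{span}(\Tmat)$. This gives evaluation vector $\Pmat\Emat\Pmat\bm{v}$. Since $\Tmat^t\Pmat\bm{v}=\nullvec$, $h_{\bm{v}}$ lies in $\mathcal{S}_n$, and by construction it lies in $\mathcal{P}_{m-1}^\perp$.

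Next take $f\in\mathcal{P}_{m-1}^\perp$ with evaluation vector $\fvec$, so $\Pmat\fvec=\fvec$. The identity above gives $\mathcal{B}_m(h_{\bm{v}},f)=(\Pmat\bm{v})^t\fvec=\bm{v}^t\fvec$. Choosing $\bm{v}=\bm{e}_i$ shows $\mathcal{B}_m(h_{\bm{e}_i},f)=f(\xvec_i)$, so by uniqueness of representers $h_{\bm{e}_i}=k(\cdot,\xvec_i)$. Evaluating at $\xvec_j$ yields $k(\xvec_j,\xvec_i)=(\Pmat\Emat\Pmat)_{ji}$.

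The main obstacle is this step. It needs the dimension count that $\mathcal{P}_{m-1}^\perp$ has dimension exactly $n-M$, and that every element is an $h_{\bm{v}}$. The latter follows because $f\mapsto\fvec$ is injective on $\mathcal{S}_n$ (a standard fact, using conditional positive definiteness of $\Emat$ on $\ker\Tmat^t$). I would either cite this or argue that $\bm{v}\mapsto h_{\bm{v}}$ restricted to $\ker\Tmat^t$ is injective via $\mathcal{B}_m(h_{\bm{v}},h_{\bm{v}})=\bm{v}^t\Pmat\Emat\Pmat\bm{v}>0$.

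For part iii), start from a Demmler-Reinsch basis. Design-orthogonality implies that $\varphi_{M+1},\dots,\varphi_n$ are design-orthogonal to $\varphi_1,\dots,\varphi_M$. The latter have $\gamma=0$, so they span $\mathcal{P}_{m-1}$. Hence $\varphi_j\in\mathcal{P}_{m-1}^\perp$ for $j>M$. By i), $\mathcal{B}_m(k(\cdot,\xvec_i),\varphi_j)=\varphi_j(\xvec_i)$. Expanding $k(\cdot,\xvec_i)$ in the $\mathcal{B}_m$-orthogonal basis $\{\varphi_l\}_{l>M}$ gives
\[
k(\cdot,\xvec_i)=\sum_{l>M}\gamma_l^{-1}\varphi_l(\xvec_i)\varphi_l .
\]
Writing $\bm{\varphi}_l$ for the evaluation vector of $\varphi_l$, this yields $\Kmat=\sum_{l>M}\gamma_l^{-1}\bm{\varphi}_l\bm{\varphi}_l^t$. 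Combined with $\bm{\varphi}_j^t\bm{\varphi}_l=n\delta_{jl}$, we get $\Kmat\bm{\varphi}_j=(n/\gamma_j)\bm{\varphi}_j$.

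For the converse, take an eigendecomposition of $\Kmat=\Pmat\Emat\Pmat$. Its eigenvectors with nonzero eigenvalue $\mu_j$, normalized to squared norm $n$, lie in $\mathrm{span}(\Tmat)^\perp$; by ii) $\Kmat$ has rank $n-M$ there. Define $\varphi_j=h_{\bm{u}_j}/\mu_j$, whose evaluation vector is $\bm{u}_j$. The same identity gives $\mathcal{B}_m(\varphi_j,\varphi_l)=\bm{u}_j^t\bm{u}_l/\mu_j=(n/\mu_j)\delta_{jl}$. Adjoining a design-orthonormal basis of $\mathcal{P}_{m-1}$ and ordering by decreasing $\mu_j$ then gives a Demmler-Reinsch basis with $\gamma_j=n/\mu_j$.
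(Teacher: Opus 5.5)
Your argument is correct. Part i) is the same computation as in the paper, but for ii) and for the converse in iii) you take a different route.

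\textbf{Part ii).} The paper fixes one particular basis of $\mathcal{P}_{m-1}^\perp$, with design matrix $\Zmat=\Pmat\Emat\Qmat_0$ and penalty matrix $\Kmat_z=\Qmat_0^t\Emat\Qmat_0$ (where $\Qmat_0$ is an orthonormal basis of $\ker\Tmat^t$). It then checks $\Zmat\Kmat_z^{-1}\Zmat^t=\Pmat\Emat\Pmat$ by matrix algebra. This tacitly relies on three facts: that $k$ does not depend on the chosen basis, that $\Qmat_0^t\Emat\Qmat_0$ is invertible, and that these $n-M$ functions span $\mathcal{P}_{m-1}^\perp$. Your uniqueness-of-representer argument needs none of these. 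It uses only $h_{\bm{e}_i}\in\mathcal{P}_{m-1}^\perp$ and positive definiteness of $\mathcal{B}_m$ there, which you established in i).

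\textbf{Part iii).} The paper obtains $\Kmat=\sum_{j>M}\varphivec_j\varphivec_j^t/\gamma_j$ by implicitly taking $z_j=\varphi_{M+j}$. For the converse it uses the abstract transition matrix $\Kmat_z^{-1}\Zmat^t\Umat_{n-M}\Lambdamat_{n-M}^{-1}\sqrt{n}$ relative to an arbitrary basis. You instead write the new basis functions explicitly, with RBF coefficients $\bm{u}_j/\mu_j$ and polynomial coefficients $-(\Tmat^t\Tmat)^{-1}\Tmat^t\Emat\Pmat\bm{u}_j/\mu_j$. With $\Tmat=\Qmat\Rmat$, these are exactly the nonpolynomial columns of $\Amat_k$ in step 5 of Algorithm~1. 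So your route also justifies the algorithm's transition matrix, whereas the paper's is shorter and works for any basis $\{z_j\}$.

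\textbf{Where the rank fact is actually needed.} The ``main obstacle'' you flag does not arise in ii). The uniqueness argument never needs $\dim\mathcal{P}_{m-1}^\perp=n-M$, nor that every element is some $h_{\bm{v}}$. It does arise in the converse of iii), where you need exactly $n-M$ nonzero eigenvalues of $\Kmat$. Part ii) alone does not give this, and your suggested justification via $\bm{v}^t\Pmat\Emat\Pmat\bm{v}>0$ is circular, since that positivity is what must be shown.

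A non-circular fix uses your own identity, which gives $J_m(f)=\deltavec^t\fvec$ on $\mathcal{S}_n$:
\begin{itemize}
\item If $\fvec=\nullvec$, then $J_m(f)=0$, so $f\in\mathcal{P}_{m-1}$, and $f=0$ by unisolvency. Hence evaluation maps the $n$-dimensional $\mathcal{S}_n$ bijectively onto $\mathds{R}^n$.
\item Now take $\bm{v}\in\ker\Tmat^t$ with $\Kmat\bm{v}=\nullvec$. The evaluation vector of $h_{\bm{v}}$ is zero, so $h_{\bm{v}}=0$.
\item Therefore $\bm{v}^t\fvec=\mathcal{B}_m(f,h_{\bm{v}})=0$ for every $f\in\mathcal{S}_n$. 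By surjectivity of evaluation, $\bm{v}=\nullvec$.
\end{itemize}
The paper takes this rank for granted as well, since it inverts $\Lambdamat_{n-M}$. So this is a tidying point, not a gap relative to the paper.
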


 See Appendix~\ref{sec:ProofReproducing} for a proof. 
 The key step of our construction of the TDRB smoother is thus an eigendecomposition of the kernel matrix $\Kmat=\Pmat\Emat\Pmat$. As we only require the first $k$ basis functions, a truncated eigendecomposition is sufficient. In conclusion, we obtain Algorithm~\ref{algo1} summarized below.
 
\begin{algorithm}\label{algo1}
\caption{Construction of the TDRB smoother with computational cost $O(n^2k)$.
}\label{alg:cap}
~
\KwIn{Input matrix $\Xmat=(\xvec_1,\dots,\xvec_n)^t$, penalty order $m$, number of basis functions $k$}
\KwOut{Design matrix $\Phimat_k$ of size $n\times k$, diagonal penalty matrix $\Gammamat_k$ of size $k\times k$, transition matrix $\Amat_k$ of size $(n+M) \times k$} 
\textbf{Construction:}
\begin{enumerate}
\item Set up the $n\times M$ polynomial design matrix $\Tmat$ and the $n\times n$ RBF design matrix $\Emat$.
    \item Compute a QR decomposition $\Tmat=\Qmat\Rmat$ with $\Qmat$ of size $n\times M$ and $\Rmat$ of size $M\times M$.
    \item Compute the kernel matrix $\Kmat=\Pmat \Emat \Pmat$ efficiently, where $\Pmat=\Imat_n-\Qmat \Qmat^t$ (see Remark~\ref{RemarkAlgo1}).
    \item Compute a truncated eigendecomposition $\Kmat\approx \Umat_{k-M}\Lambdamat_{k-M}\Umat_{k-M}^t$. Thereby $\Lambdamat_{k-M}$ is diagonal and has the largest $(k-M)$ eigenvalues of $\Kmat$ as diagonal entries (sorted decreasingly) and $\Umat_{k-M}$ has the corresponding orthonormal eigenvectors as columns.
    \item Compute the transition matrix 
    $\Amat_k=\sqrt{n}\begin{pmatrix}
   \Rmat^{-1} & -\Rmat^{-1}\Qmat^t \Emat\Pmat\Umat_{k-M}\Lambdamat_{k-M}^{-1}\\ \nullmat & \Pmat\Umat_{k-M}\Lambdamat_{k-M}^{-1}
\end{pmatrix}.$
\item Return the design matrix $\Phimat_k=\sqrt{n}(\Qmat,\Umat_{k-M})$, the penalty matrix $\Gammamat_k=\text{bdiag}(\nullvec_M,n\Lambdamat_{k-M}^{-1})$,\\ the transition matrix $\Amat_k$ and the proportion of prior variance explained $\text{tr}(\Lambdamat_{k-M})/\text{tr}(\Kmat)$.
\end{enumerate}
\end{algorithm}

\begin{remark}\label{RemarkAlgo1}Some remarks are as follows.
\begin{enumerate}[a)]
    \item Computational cost: The matrices $\Pmat$ and $\Emat$ are both of size $n\times n$ so that direct computation of $\Pmat \Emat \Pmat$ has cubic cost $O(n^3)$. However, using that $\Pmat\Emat\Pmat=\Emat-\Hmat\Emat-\Emat\Hmat+\Hmat\Emat\Hmat$ with $\Hmat=\Qmat\Qmat^t$ one can reduce the cost to $O(n^2M)$. To this end, first compute $\Emat\Hmat=\Emat\Qmat\Qmat^t$. Then set $\Hmat\Emat=(\Emat\Hmat)^t$ and use this to compute $\Hmat\Emat\Hmat=\Hmat\Emat\Qmat\Qmat^t$. Using Lanczos algorithm, the truncated eigendecomposition of $\Pmat\Emat\Pmat$ has cost $O(n^2(k-M))$ \citep{wood_thin_2003}. Thus, the overall cost  of Algorithm~\ref{algo1} is $O(n^2k)$, which is exactly the same as required to set up TPRS \citep{wood_thin_2003}.
    \item Replicated inputs: In practice often replicated inputs occur,  i.e., $\xvec_i=\xvec_j$ for some $i\neq j$, especially for large sample sizes $n$. While Algorithm~\ref{algo1} can be used, a more efficient approach is to introduce a diagonal weight matrix $\Wmat$ that accounts for the multiplicity of the inputs. With this, one can reduce the cost to $O(n^2_{unique}k)$, where $n_{unique}$ is the number of unique inputs. Often $n_{unique}$ is much smaller than $n$, so that this can make a substantial difference; see Appendix~\ref{sec:Algo2} for details.
        \item Predictions at test points: The matrix $\Amat_k$ is called transition matrix since $\Phimat_k=(\Tmat,\Emat)\Amat_k$. The matrix $\Amat_k$ facilitates predictions at test points $\xvec_1^\ast,\dots,\xvec_{n^\ast}^\ast\in \mathds{R}^d$. To this end, simply set up the corresponding polynomial and RBF design matrices $\Tmat^\ast$ and $\Emat^\ast$ and then compute $\Phimat_k^\ast=(\Tmat^\ast,\Emat^\ast)\Amat_k$.

   \item Connection to KL expansion: The eigenvectors of $\Kmat$ correspond to eigenfunctions of the operator $T_n:\mathcal{P}^\perp_{m-1}\to \mathcal{P}^\perp_{m-1}$ with $T_nf(\cdot)=1/n\sum_{i=1}^n k(\cdot,\xvec_i)f(\xvec_i)$. 
    This reveals a close connection between the Demmler-Reinsch basis and a KL expansion \citep[see, e.g.,][Section 1.4]{ash1975topics}. More specifically, the nonpolynomial basis functions $\{\varphi_j\}_{j=M+1}^n$ of a Demmler-Reinsch basis coincide with the basis functions of a discrete KL expansion, i.e., a KL expansion based on the empirical distribution of the inputs $\xvec_i,\ i=1,\dots,n$, instead of Lebesgue measure. The relevant optimality properties are addressed in Proposition~\ref{Prop:BayesianOptimality} further below.

    \item Infinite-dimensional Bayesian formulation:  \cite{silverman_aspects_1985} distinguishes a finite-dimensional and an infinite-dimensional Bayesian formulation for smoothing splines. This distinction is also valid for thin plate splines and the prior~\eqref{BayesianSmoothnessPrior} corresponds to the finite-dimensional formulation. The infinite-dimensional formulation uses a GP with covariance function $\widetilde{c}(\xvec,\widetilde{\xvec})=\tau^2\widetilde{k}(\xvec,\widetilde{\xvec})$, where $\widetilde{k}(\xvec,\widetilde{\xvec})$ is the RK of the RKHS $\{f\in \text{BL}^m(\mathds{R}^d):\sum^n_{i=1} f(\xvec_i)p(\xvec_i)=0,\ p\in \mathcal{P}_{m-1}\}$; see \citet[][Section 4.3.2]{gu_smoothing_2013} for details on $\widetilde{k}(\xvec,\widetilde{\xvec})$. While $\widetilde{k}\neq k$, one can show that $\widetilde{k}(\cdot,\xvec_i)=k(\cdot,\xvec_i),\ i=1,\dots,n$. Thus, a discrete KL expansion of the GP with covariance $\widetilde{c}(\xvec,\widetilde{\xvec})$ leads to the same basis functions.
\end{enumerate}
\end{remark}

From a Bayesian perspective, using the TDRB smoother $\widehat{f}_\lambda^k$ corresponds to replacing the thin plate spline prior~\eqref{BayesianSmoothnessPrior} by the prior
\begin{align}\label{TDRBPrior}
\widetilde{f}=\widetilde{f}_{poly}+\widetilde{f}_{gp}=\sum_{j=1}^M\varphi_j\beta_j+\sum_{j=M+1}^{k}\varphi_{j}\beta_j.
\end{align}
Thereby, $\{\varphi_j\}_{j=1}^k$ are the first $k$ basis functions of a Demmler-Reinsch basis and the coefficients have priors $\beta_j\sim 1,\ j=1,\dots,M,$ and $\beta_j\sim N(0,\tau^2/\gamma_{j}),\ j=M+1,\dots,k$. Proposition~\ref{Prop:BayesianOptimality} investigates the relation of~\eqref{TDRBPrior} and the thin plate spline prior~\eqref{BayesianSmoothnessPrior}.

\begin{proposition}~\label{Prop:BayesianOptimality} Let $\{\varphi_j\}_{j=1}^n$ be a Demmler-Reinsch basis. For $k\in\{M+1,\dots,n\}$ let $H_{\mathcal{U}^\ast}$ be the least squares projection onto the $(k-M)$-dimensional subspace $\mathcal{U}^\ast=span(\{\varphi_j\}_{j=M+1}^k)\subseteq\mathcal{P}^\perp_{m-1}$. Let further $H_\mathcal{U}$ be the least squares projection onto an arbitrary $(k-M)$-dimensional subspace $\mathcal{U}\subseteq\mathcal{P}^\perp_{m-1}$ and let $\Hmat_{\mathcal{U}}$ be the corresponding $n\times n$ projection matrix. Let finally $f_{gp}$ as in~\eqref{BayesianSmoothnessPrior}. Then it holds:
\begin{enumerate}[i)]
    \item $E\|f_{gp}-H_{\mathcal{U}^\ast}f_{gp}\|^2_n\leq E\|f_{gp}-H_{\mathcal{U}}f_{gp}\|^2_n$.
    \item $E\|f_{gp}\|^2_n\geq E\|H_{\mathcal{U}^\ast}f_{gp}\|^2_n\geq E\|H_{\mathcal{U}}f_{gp}\|^2_n$ as well as $E\|f_{gp}\|^2_n=\tau^2/n\ \text{tr}(\Kmat),\ E\|H_{\mathcal{U}^\ast}f_{gp}\|_n^2=\tau^2/n\ \text{tr}(\Lambdamat_{k-M})$ and $E\|H_{\mathcal{U}}f_{gp}\|_n^2=\tau^2/n\ \text{tr}(\Hmat_{\mathcal{U}}\Kmat)$.
    \item The distribution of the projected process $H_{\mathcal{U}^\ast} f_{gp}$ is the same as that of $\widetilde{f}_{gp}$ in~\eqref{TDRBPrior}. 
\end{enumerate}
\end{proposition}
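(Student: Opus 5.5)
The plan is to reduce everything to finite-dimensional linear algebra on the vector of function evaluations. By Proposition~\ref{PropositionConstructionDRB}, the vector $\fvec_{gp}=(f_{gp}(\xvec_1),\dots,f_{gp}(\xvec_n))^t$ is Gaussian with mean zero and covariance $\tau^2\Kmat$, where $\Kmat=\Pmat\Emat\Pmat$. A function in $\mathcal{P}^\perp_{m-1}\subseteq\mathcal{S}_n$ is determined by its evaluations, since $\mathcal{S}_n$ is $n$-dimensional and interpolates. The least squares projection $H_{\mathcal{U}}$ onto $\mathcal{U}\subseteq\mathcal{P}^\perp_{m-1}$ therefore acts on evaluation vectors as an orthogonal projection matrix $\Hmat_{\mathcal{U}}$ of rank $k-M$. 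Its range lies in $\text{span}(\Tmat)^\perp$, so $\Hmat_{\mathcal{U}}\Pmat=\Hmat_{\mathcal{U}}$. For $\mathcal{U}^\ast$, part iii) of Proposition~\ref{PropositionConstructionDRB} says the evaluation vectors of $\varphi_{M+1},\dots,\varphi_k$ are eigenvectors of $\Kmat$. These are orthonormal after scaling by $1/\sqrt{n}$.

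Next I check that they belong to the $k-M$ largest eigenvalues. Writing $\varphi_j=\sum_i k(\cdot,\xvec_i)c_i$ and using the reproducing property, one gets $\mathcal{B}_m(\varphi_j,\varphi_j)=\gamma_j$ and $\Kmat\bm{v}_j=(n/\gamma_j)\bm{v}_j$, where $\bm{v}_j$ is the evaluation vector of $\varphi_j$. So the eigenvalue is $\mu_j=n/\gamma_j$, and the ordering $\gamma_{M+1}\le\dots\le\gamma_n$ makes $\mu_{M+1}\ge\dots\ge\mu_k$ the largest ones. The remaining eigenvalues belong to $\Tmat$-directions, where they are zero. Hence $\Hmat_{\mathcal{U}^\ast}=\Umat_{k-M}\Umat_{k-M}^t$.

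For ii), compute $E\|H_{\mathcal{U}}f_{gp}\|_n^2=\frac1n E\,\fvec_{gp}^t\Hmat_{\mathcal{U}}\fvec_{gp}=\frac{\tau^2}{n}\text{tr}(\Hmat_{\mathcal{U}}\Kmat)$. Taking $\Hmat=\Imat$ gives $\tau^2/n\,\text{tr}(\Kmat)$, and taking $\Hmat_{\mathcal{U}^\ast}$ gives $\tau^2/n\,\text{tr}(\Lambdamat_{k-M})$. The inequality $\text{tr}(\Hmat_{\mathcal{U}}\Kmat)\le\text{tr}(\Lambdamat_{k-M})$ over rank-$(k-M)$ orthogonal projections is Ky Fan's maximum principle. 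The first inequality $\text{tr}(\Kmat)\ge\text{tr}(\Lambdamat_{k-M})$ holds because $\Kmat\succeq0$. Part i) then follows from Pythagoras, $E\|f_{gp}-H_{\mathcal{U}}f_{gp}\|_n^2=E\|f_{gp}\|_n^2-E\|H_{\mathcal{U}}f_{gp}\|_n^2$, together with ii).

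For iii), note that $H_{\mathcal{U}^\ast}f_{gp}=\sum_{j=M+1}^k\varphi_j(\varphi_j,f_{gp})_n$, since the $\varphi_j$ are $(\cdot,\cdot)_n$-orthonormal. The coefficients $\beta_j=(\varphi_j,f_{gp})_n=\frac1n\bm{v}_j^t\fvec_{gp}$ are jointly Gaussian with mean zero. Their covariance is $\frac{\tau^2}{n^2}\bm{v}_j^t\Kmat\bm{v}_l=\frac{\tau^2}{n^2}\mu_j\,n\,\delta_{jl}=\tau^2/\gamma_j\,\delta_{jl}$. This matches the prior in~\eqref{TDRBPrior}, so the two processes share a law because they are the same linear image of identically distributed coefficients.

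The main obstacle I expect is the eigenvalue identification $\mu_j=n/\gamma_j$, which links $\mathcal{B}_m$-orthogonality to $\Kmat$. I would derive it from the reproducing property in Proposition~\ref{PropositionConstructionDRB} i). The key identity is $\mathcal{B}_m(\sum_i k(\cdot,\xvec_i)c_i,\varphi_l)=\bm{c}^t\bm{v}_l$, combined with the fact that $\{k(\cdot,\xvec_i)\}$ spans $\mathcal{P}^\perp_{m-1}$. A minor subtlety is that $\Kmat$ is singular, so the coefficient vector $\bm{c}$ should be taken in $\text{span}(\Tmat)^\perp$.
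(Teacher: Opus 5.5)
Your proposal is correct and follows essentially the same route as the paper: Pythagoras reduces i) to ii), then the trace identity $E\|H_{\mathcal{U}}f_{gp}\|_n^2=\tau^2/n\,\text{tr}(\Hmat_{\mathcal{U}}\Kmat)$, Ky Fan's maximum principle and the eigen-relation $\Kmat\varphivec_j=(n/\gamma_j)\varphivec_j$ give ii) and the coefficient distributions in iii). Two details the paper leaves implicit are handled explicitly in your version: you check that these eigenvalues are the $k-M$ largest, and you compute the joint rather than only the marginal covariance of the coefficients.
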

 \begin{proof}
    See Appendix~\ref{sec:ProofBayesianOptimality}.
\end{proof}

Proposition~\ref{Prop:BayesianOptimality} shows that~\eqref{TDRBPrior} is an optimal low rank approximation of the thin plate spline prior~\eqref{BayesianSmoothnessPrior}. In addition to that, Proposition~\ref{Prop:BayesianOptimality} justifies to use the ratio $E\|H_{\mathcal{U}^\ast}f_{gp}\|_n^2/E\|f_{gp}\|^2_n=\text{tr}(\Lambdamat_{k-M})/\text{tr}(\Kmat)$  as a measure of the explained prior variance for the TDRB smoother.

The following Section~\ref{sec:EmpiricalEvidence} provides empirical evidence for the TDRB smoother.


\section{Empirical evidence}\label{sec:EmpiricalEvidence}
This section provides empirical evidence. We first consider simulations and then a real data example.
\subsection{Simulations}
We consider two-dimensional spatial smoothing using the data generating process
    \begin{align*}
        y_i=f_0(x_{i1},x_{i2})+\epsilon_i=\sin(2\pi x_{i1})\cos(2\pi x_{i2})+\epsilon_i,\ \epsilon_i\overset{iid}{\sim} N(0,\sigma^2),\ i=1,\dots,n,
    \end{align*}
    for sample size $n=1000$, error variance $\sigma^2=1$ and uniformly distributed inputs $\xvec_i\overset{iid}{\sim} U([0,1]^2)$. We generate $R=1000$ replicate data sets and use the following low rank thin plate spline approximations with second order penalization $(m=2)$ for estimation of $f_0$: 
    \begin{enumerate}
        \item \textbf{RSUB:} We use a random subset of the inputs $\xvec_1,,\dots,\xvec_n$ as knots $\kappavec_1,\dots,\kappavec_k$
        and then minimize the penalized sum of squares~\eqref{PenalizedSumSquares} over the corresponding space of natural thin plate splines;
        see \citet[][p. 219-221]{wood_generalized_2017} for further details.
        \item \textbf{TPRS:} We follow the description in  \citet[][Appendix A]{wood_thin_2003} to set up the TPRS design matrix and penalty matrix. To compute the truncated eigenvalue decomposition of the RBF design matrix $\Emat$, we use the function \texttt{slanczos} in the \texttt{R} package \texttt{mgcv}. 
        \item \textbf{TDRB:} We use Algorithm~\ref{algo1} for the construction of the TDRB smoother. To compute the truncated eigendecomposition of $\Kmat=\Pmat\Emat\Pmat$, we again use the function \texttt{slanczos}.
    \end{enumerate}
For each of the methods we vary the number of basis functions $k\in \{5,6,\dots,19\}\cup\{20,30,\dots,80\}$. The idea is to cover both, a relatively small number of basis functions $(k\leq 19)$ and a larger number of basis functions, which seems more suitable in practice ($k\geq 20$). 

To infer a suitable value $\widehat{\lambda}$ we use generalized crossvalidation (GCV). Following \cite{craven1979smoothing} the GCV score is defined as $\text{GCV}(\lambda)={1/n\ \|\yvec-\Smat_\lambda^k \yvec\|^2}/{[1-\text{tr}(\Smat_\lambda^k)/n]^2}$ and minimized over a fine grid of $\lambda$ values. To speed up the computation, we exploit that the GCV score has a computationally convenient expression for biorthogonal basis functions. More specifically, if the Gramian matrix $\Xmat^t\Xmat/n$ is diagonal with diagonal elements $g_j,\ j=1,\dots,k,$ and the penalty matrix $\Kmat_x$ is diagonal with diagonal elements $\gamma_j, j=1,\dots,k$, then the GCV score can be expressed in the form
     \begin{align}\label{GCVBiorthogonal}
         \text{GCV}(\lambda)=\dfrac{\yvec^t\yvec/n+\sum_{j=1}^k \widehat{\beta}_j^2 \{g_jd_j(\lambda)^2-2d_j(\lambda)\}}{[1-1/n \sum^k_{j=1} g_jd_j(\lambda)]^2},
     \end{align}
where $d_j(\lambda)=1/(g_j+\lambda \gamma_j)$ and $\widehat{\betavec}=\Xmat^t\yvec/n$. Expression~\eqref{GCVBiorthogonal} follows from straightforward computations and implies that the GCV score can be computed very efficiently in $O(k)$ operations for each $\lambda$ value.
For the TDRB smoother biorthogonality is directly given from the definition. To use expression~\eqref{GCVBiorthogonal} also for RSUB and TPRS, we use a reparametrization. This reparametrization is based on solving the generalized eigenvalue problem~\citep[][Chapter 15]{parlett_symmetric_1998} for the pair of matrices $(\Kmat_x,\Xmat^t\Xmat/n+\Kmat_x)$ and yields biorthogonal basis functions. The reparametrization not only facilitates computation of the GCV score but also decomposes the design matrix of RSUB and TPRS into a polynomial part and the orthogonal complement. Thus, we can use the ratio $\text{tr}(\Hmat_\mathcal{U}\Kmat)/\text{tr}(\Kmat)$ as proportion of explained variance, where $\Hmat_{\mathcal{U}}$ is the projection matrix onto the nonpolynomial part (cf. Proposition~\ref{Prop:BayesianOptimality}). 

\paragraph{Results.}
Figure~\ref{fig:RMSE} shows the root mean squared error $\text{RMSE}=\|\widehat{f}_{\widehat{\lambda}}^k-f_0\|_n$. Table~\ref{Tab:PPVE} shows the proportion of prior variance explained. The main observations are as follows: 
\begin{itemize}
    \item For a small number of basis functions ($k\leq 19$) the performance of the three smoothers is quite different, but there is no clear pattern. In particular, there are also nonnegligible differences between TPRS and TDRB.
    \item For a larger number of basis functions ($k\geq 20$) the performance of the TDRB smoother and TPRS is very similar and consistently better than that of RSUB.
    \item In line with Proposition~\ref{Prop:BayesianOptimality} the proportion of prior variance explained is always largest for the TDRB smoother. However, TPRS capture almost the same amount of prior variance, especially when a larger number of basis functions is used. 
\end{itemize}

\begin{figure}[H]
    \centering
    \includegraphics[width=\linewidth]{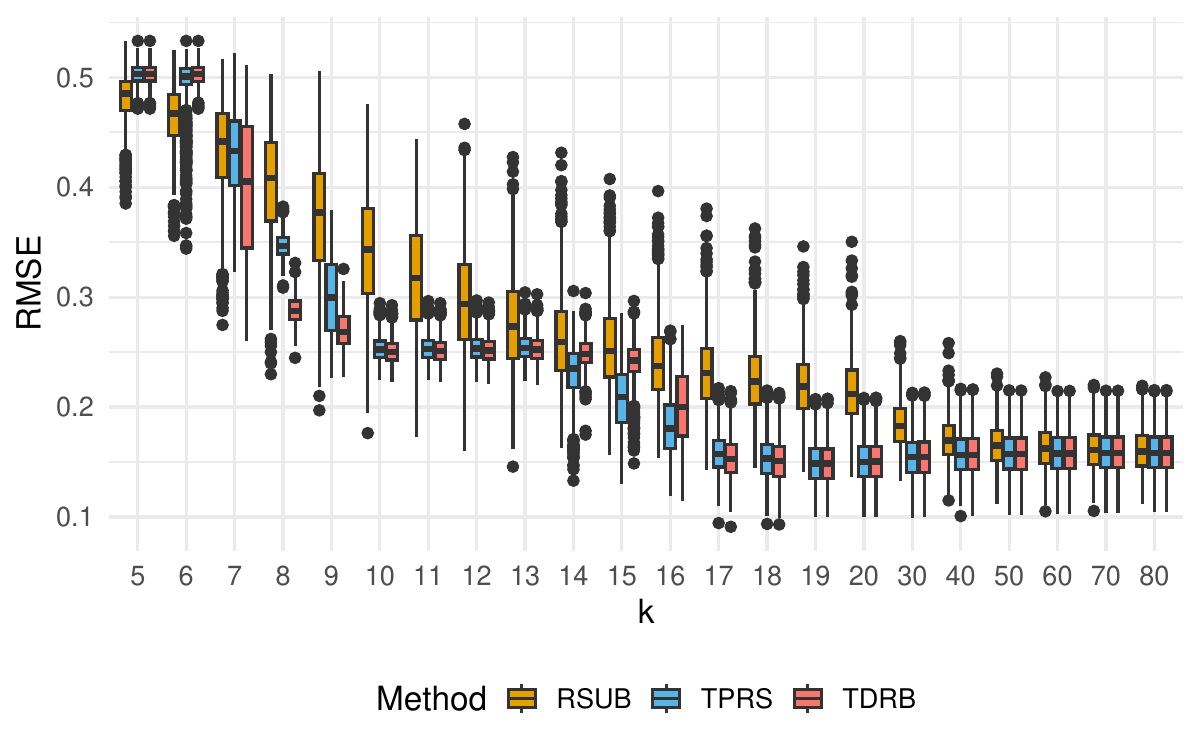}
    \caption{Root mean squared error (RMSE) in simulations. Lower is better and $k$ is the number of basis functions that were used for the low rank thin plate spline approximations.} 
    \label{fig:RMSE}
\end{figure}

\begin{table}[H]
\centering
\begin{tabular}{lrrr|lrrr}
  \hline
k & RSUB & TPRS & TDRB & k & RSUB & TPRS & TDRB \\ 
  \hline
5 & 33.95 & 47.22 & 47.22 & 16 & 84.20 & 89.14 & 89.17 \\ 
  6 & 46.91 & 60.53 & 61.34 & 17 & 85.27 & 90.03 & 90.04 \\ 
  7 & 56.54 & 67.62 & 67.86 & 18 & 86.17 & 90.68 & 90.69 \\ 
  8 & 63.09 & 73.61 & 73.91 & 19 & 87.01 & 91.30 & 91.30 \\ 
  9 & 68.08 & 77.25 & 77.39 & 20 & 87.76 & 91.77 & 91.78 \\ 
  10 & 72.15 & 80.64 & 80.65 & 30 & 92.36 & 94.92 & 94.92 \\ 
  11 & 75.15 & 82.77 & 82.79 & 40 & 94.53 & 96.38 & 96.38 \\ 
  12 & 77.60 & 84.67 & 84.69 & 50 & 95.78 & 97.21 & 97.21 \\ 
  13 & 79.90 & 86.15 & 86.16 & 60 & 96.57 & 97.75 & 97.75 \\ 
  14 & 81.54 & 87.20 & 87.24 & 70 & 97.14 & 98.12 & 98.12 \\ 
  15 & 82.72 & 88.19 & 88.24 & 80 & 97.56 & 98.39 & 98.39 \\ 
   \hline
\end{tabular}
\caption{Proportion of prior variance explained in percent. Shown is the mean across the $1000$ replicates (rounded to two decimal places). A higher value indicates a better approximation of the thin plate spline and $k$ is the number of basis functions that were used for the low rank approximations. } 
 \label{Tab:PPVE}
\end{table}

\newpage
\subsection{Real data example: Bayesian additive modelling of mackerel egg data}\label{sec:RealDataExample}
To demonstrate that the TDRB smoother is also applicable for additive models, we reanalyze the mackerel egg data similar to \citet[][Section 4]{wood_thin_2003}. The data set \texttt{mack} is part of the \texttt{R} package \texttt{gamair} and consists of $n=634$ observations. The response is egg density and the covariates of interest are distance, depth, longitude and latitude; see \cite{wood_thin_2003} and references therein for more information about the data. Just as in \citet[][]{wood_thin_2003} we consider the model
\begin{align*}
    \sqrt{egg.dens}=\beta_0+f_{dist}(dist)+f_{depth}(depth)+f_{spat}(lon,lat)+\epsilon,\epsilon\sim N(0,\sigma^2).
\end{align*}
In contrast to \cite{wood_thin_2003} we use the TDRB smoother instead of TPRS for the three additive components $f_{dist}, f_{depth}, f_{spat}$. In addition, we use a fully Bayesian approach with weakly informative inverse gamma priors $\text{IG}(1/1000,1/1000)$ for all variance parameters. The penalty order is $m=2$ and the numbers of basis functions are $k=10$ for $f_{dist}$ and $f_{depth}$ as well as $k=50$ for $f_{spat}$. This yields a proportion of prior variance explained of 99.76\%, 99.85\% and 98.43\%, respectively. For posterior sampling we use a Gibbs sampler with $10^5$ iterations \citep[see, e.g., ][for details]{fahrmeir_regression_2021}. 

To enhance numerical stability, we use a preprocessing step and first transform the raw inputs of $dist$ and $depth$ to the unit interval $[0,1]$ and the spatial coordinates $lon$ and $lat$ to the unit cube $[0,1]^2$ (using the same scaling for $lon$ and $lat$ to maintain isotropy). Such a preprocessing step is standard and the results are transformed back to the original scale after estimation. To solve the identifiablity issue inherent in additive models, we use the monomial basis with $t_1\equiv 1$ for the polynomial nullspace of each additive component. With this, the first Demmler-Reinsch basis function of each smooth is constant (due to the transformation to the unit interval/cube, this is still true if pivoting is used for the QR decomposition of $\Tmat$). This means that we can simply omit the first column of the design matrix of each smooth to ensure identifiability. 

Figure~\ref{fig:mackerel} shows the estimated posterior mean effects $\widehat{f}_{dist}$, $\widehat{f}_{depth}$ and $\widehat{f}_{spat}$, which closely resemble the estimated effects of \citet[][Section 4]{wood_thin_2003} based on TPRS.
\begin{figure}[H]
    \centering
    \includegraphics[width=\linewidth]{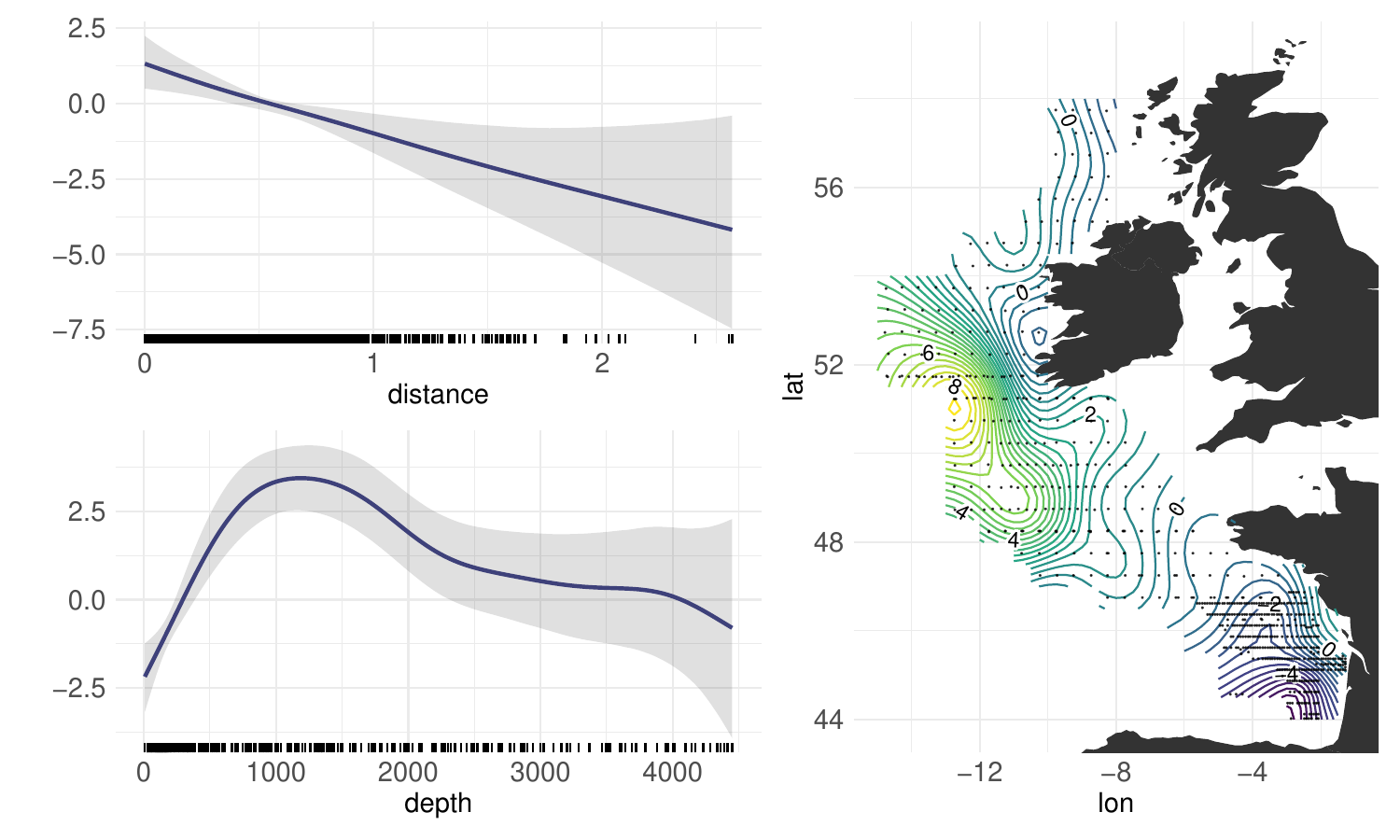}
    \caption{TDRB smoother for mackerel egg data. Shown are the estimated posterior mean effects of distance and depth with pointwise 95\% credible bands (left) as well as a contour plot of the estimated spatial effect (right).}
    \label{fig:mackerel}
\end{figure}


\section{Discussion}~\label{sec:Discussion}
This paper has introduced a computationally efficient low rank approximation of thin plate splines. The key idea is to restrict the minimization of the penalized sum of squares~\eqref{PenalizedSumSquares} to the subspace that is spanned by the first $k$ basis functions of a Demmler-Reinsch basis and we refer to the resulting smoother as TDRB smoother.
In our simulations the performance of the TDRB smoother was very similar to that of TPRS unless a very small number of basis functions $k$ was used. Given their close relation, this is not surprising. As demonstrated in the real data example of Section~\ref{sec:RealDataExample}, the TDRB smoother can be used as main building block for estimation of GAMs similar to TPRS. Despite their similarities, we argue that the TDRB smoother has some advantages:

\begin{itemize}
    \item Optimality: First, the optimality of the TDRB smoother is more transparent as that of TPRS. For uni- or multivariate scatterplot smoothing, we have a best low rank approximation of the smoother matrix in terms of Frobenius norm and spectral norm. From a Bayesian perspective, we have a best low rank approximation of the (proper part of) the equivalent Bayesian smoothness prior in the sense of a discrete KL expansion. In contrast to that, TPRS only provide an optimal approximation if one ignores the constraints $\Tmat^t\deltavec=\nullvec$.
    \item Choice of $k$: \citet[][]{wood_thin_2003} suggests to use the effective degrees of freedom to decide if the number of basis functions $k$ is large enough. For the TDRB smoother this quantity can be complemented by the proportion of prior variance explained  $\text{tr}(\Lambdamat_{k-M})/\text{tr}(\Kmat)$. A too low value (say $< 0.9$) may indicate that a higher number of basis functions $k$ is required.
    While it is also possible to compute the proportion of prior variance explained for TPRS, this requires to first compute the kernel matrix $\Kmat=\Pmat\Emat\Pmat$. Using the RBF design matrix $\Emat$ to compute the proportion of prior variance explained is not possible: The diagonal elements of $\Emat$ are equal to $0$ so that $\text{tr}(\Emat)=0$. 
    \item Biorthogonality: The biorthogonality of the Demmler-Reinsch basis functions can be be used to speed up computations as demonstrated with the GCV score. In addition, it is very straightforward to use the TDRB smoother in additive models. This is because one can easily ensure that the first Demmler-Reinsch basis function is constant. The remaining basis functions are orthogonal so that one can simply omit the first column of the design matrix to ensure identifiability. Thus, in contrast to TPRS one does not need another reparametrization.
    \item Asymptotics: As we have shown, it is fairly straightforward to prove rigorous asymptotic results for the TDRB smoother by combining existing results for thin plate splines with proof strategies developed for the asymptotics of penalized B-splines. In contrast to that, it seems much more difficult to establish rigorous asymptotic results for TPRS.
\end{itemize}

\citet[][Section 5]{wood_thin_2003} posed several open questions for TPRS. Among these are the question whether a more strongly optimal approximation of thin plate splines is possible and whether additional guidance regarding the choice of the number of basis functions $k$ can be given. This paper has demonstrated that a truly optimal approximation is possible and provides additional guidance on the choice of $k$ through the proportion of prior variance explained. In conclusion, we think that the TDRB smoother can be regarded as a slightly refined version of TPRS and provides a valuable option for estimation of multivariate functions and GAMs. 

\bibliographystyle{apalike}
\bibliography{references}


\appendix
\section{Appendix}\label{sec:Appendix}

\subsection{Proof of Proposition 1 (Frequentist optimality)}\label{sec:ProofTheo1}

\begin{proof} 
 Let $\{\varphi_j\}_{j=1}^n$ be a Demmler-Reinsch basis of $\mathcal{S}_n$ and let $\varphivec_j,\ j=1,\dots,n,$ denote the corresponding vectors of function evaluations, i.e., $\varphivec_j=(\varphi_j(\xvec_1),\dots,\varphi_j(\xvec_n))^t,\ j=1\dots,n$.
 In terms of $\{\varphi_j\}_{j=1}^n$, the smoother matrix of the thin plate spline $\widehat{f}_\lambda$ is $\Smat_\lambda=\sum_{j=1}^n \varphivec_j\varphivec_j^t \frac{1}{n+n\lambda \gamma_j}$.
 Furthermore, the smoother matrix of the TDRB smoother $\widehat{f}_\lambda^k$ is $\Smat^k_\lambda=\sum_{j=1}^k \varphivec_j\varphivec_j^t \frac{1}{n+n\lambda \gamma_j}$. Close inspection of the two expressions reveals that $\Smat_\lambda^k$ is a truncated singular value decomposition of $\Smat_\lambda$. The statement directly follows by the Eckart-Young theorem \citep[see, e.g.,][Section 2.4]{golub_matrix_2013}.
\end{proof}

\subsection{Proof of Theorem 1 (Asymptotics)}\label{sec:ProofAsymptotics}

First, we introduce some notation. For each $n\in \mathds{N}$ we have:

\begin{itemize}
    \item A vector of observations $\yvec^{(n)}=(y_1,\dots,y_n)^t$, a vector of residuals $\epsilonvec^{(n)}=(\epsilon_1,\dots,\epsilon_n)^t$ and a vector of function evaluations $\fvec_0^{(n)}=(f_0(\xvec_1),...,f_0(\xvec_n))^t$. These are related via: $\yvec^{(n)}=\fvec_0^{(n)}+\epsilonvec^{(n)}$.
    \item A Demmler-Reinsch basis $\{\varphi_{j,n}\}_{j=1}^n$ with design matrix $\Phimat_n$ and penalty matrix $\Gammamat_n=\text{diag}(\gamma_{1,n},\dots,\gamma_{n,n})$.
    \item A pseudo-true parameter vector $\betavec_0^{(n)}=\Phimat_n^t\fvec_0^{(n)}/n$.
    \item A vector of estimated function values $\widehat{\fvec}^{k_n}_{\lambda_n}=\Smat_{\lambda_n}^{k_n} \yvec^{(n)}$ with smoother matrix $\Smat^{k_n}_{\lambda_n}=\Phimat_n \Dmat_{\lambda_n}^{k_n}\Phimat_n^t/n$. Thereby, $\Dmat_{\lambda_n}^{k_n}$ is a diagonal matrix with diagonal entries $1/(1+\lambda_n\gamma_{j,n})$ for $j=1,\dots,k_n$ and $0$ for $j=k_n+1,\dots,n$.
\end{itemize}

We use two lemmas which follow directly from classical results for thin plate splines. 
\begin{lemma}[Eigenvalue bounds]\label{LemmaEigenvalueBounds}
    There exist constants $\alpha,\beta>0$ such that 
\begin{align}\label{eigenvalueBounds}
   \alpha j^{2m/d} \leq \gamma_{j,n}\leq \beta j^{2m/d}
\end{align}
for $j=M+1,\dots,n$ and all sufficiently large $n$.
\end{lemma}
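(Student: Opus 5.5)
The eigenvalues $\gamma_{j,n}$ have a variational description, and the bounds will come from comparing with known results rather than computing anything directly. By Definition~\ref{Def1}, $\gamma_{M+1,n}\le\dots\le\gamma_{n,n}$ are the eigenvalues of the generalized problem $\mathcal{B}_m(f,\cdot)=\gamma\,(f,\cdot)_n$ on $\mathcal{S}_n$. Courant--Fischer therefore gives
\begin{align*}
\gamma_{j,n}=\min_{\substack{V\subseteq\mathcal{S}_n\\ \dim V=j}}\ \max_{f\in V,\ \|f\|_n=1}\mathcal{B}_m(f,f).
\end{align*}
Every $f\in\text{BL}^m$ agrees at the design points with its natural thin plate spline interpolant $s_f\in\mathcal{S}_n$, and $\mathcal{B}_m(s_f,s_f)\le \mathcal{B}_m(f,f)$ by the minimal-norm property of thin plate spline interpolation. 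Hence the minimum can equivalently be taken over $j$-dimensional subspaces $V\subseteq \text{BL}^m(\mathds{R}^d)$ on which $\|\cdot\|_n$ is a norm. This reduction lets me use arbitrary test functions for the upper bound and continuous inequalities for the lower bound.

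\textbf{Upper bound.} I will use quasi-uniformity (A3). Partition $\Omega$ into about $j$ cells of diameter of order $j^{-1/d}$. Each cell contains about $n/j$ design points, because the fill distance and the separation distance are both of order $n^{-1/d}$; this requires $n$ large relative to $j$, and for $j$ comparable to $n$ one uses separate bumps around individual points. In each cell place a smooth bump of unit height and width about $j^{-1/d}$. These bumps have disjoint supports, so they are orthogonal in both inner products. Each bump has $\|\cdot\|_n^2\gtrsim 1/j$ and $\mathcal{B}_m\lesssim j^{2m/d}\cdot j^{-1}$. Every $f$ in their span therefore has Rayleigh quotient $\lesssim j^{2m/d}$, which gives $\gamma_{j,n}\le\beta j^{2m/d}$.

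\textbf{Lower bound.} This is the main obstacle. It is exactly the classical eigenvalue estimate used by \cite{utreras_convergence_1988}. He shows that the thin plate spline eigenvalues on a quasi-uniform design are comparable to those of the continuous problem $\mathcal{B}_m(f,g)=\gamma\int_\Omega fg$, with Weyl asymptotics $\gamma_j\asymp j^{2m/d}$ on a Lipschitz, cone-condition domain. My plan is to cite this directly.

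If a self-contained argument is wanted, I would proceed as follows. Suppose $V$ has dimension $j$ and every $f\in V$ satisfies $\mathcal{B}_m(f,f)\le c\,j^{2m/d}\|f\|_n^2$. Apply a sampling inequality on $\Omega$ (the Narcowich--Ward--Wendland type): $\|f\|_{L^2}\lesssim \|f\|_n+h^m|f|_{H^m}$, with fill distance $h\asymp n^{-1/d}$. Combine it with the reverse bound $\|f\|_n\lesssim \|f\|_{L^2}+h^m|f|_{H^m}$. For $j\le n$ we have $h^m j^{m/d}\lesssim 1$, so $\|\cdot\|_n$ and $\|\cdot\|_{L^2(\Omega)}$ are comparable on $V$ up to that term. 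It follows that $V$ has continuous Rayleigh quotient $\lesssim j^{2m/d}$. This contradicts the continuous Weyl lower bound when $c$ is small, with the constants handled through the small-$c$ choice. The delicate points are the Beppo Levi seminorm on $\mathds{R}^d$ versus the Sobolev seminorm on $\Omega$ (restriction only decreases it, which is the right direction), the extension theorem for Lipschitz domains, and uniformity of all constants in $n$. These are why I would rely on \cite{utreras_convergence_1988} for this half.
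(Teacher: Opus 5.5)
Your proposal is correct. For the half that matters it rests on the same ingredient as the paper, and it differs only in how the upper bound is obtained.

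\textbf{The paper's route.} The paper's proof is purely a normalization check. Changing from the Demmler--Reinsch basis to the cardinal basis with transition matrix $\Phimat_n^t/n$ gives $n\Kmat_{card}=(\Phimat_n/\sqrt{n})\Gammamat_n(\Phimat_n/\sqrt{n})^t$, which is an eigendecomposition. So the $\gamma_{j,n}$ are exactly the eigenvalues of $n\Kmat_{card}$, and Theorem 5.3 of \cite{utreras_convergence_1988} is then cited for \emph{both} inequalities. Your Courant--Fischer formulation contains the same identification: in the cardinal basis the Gram matrix of $(\cdot,\cdot)_n$ is $\Imat_n/n$, so the generalized problem becomes the ordinary eigenproblem of $n\Kmat_{card}$. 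You should state this explicitly when you invoke Utreras, since matching his normalization is all the citation requires.

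\textbf{Where you differ: the upper bound.} You reduce the min--max to subspaces of $\text{BL}^m(\mathds{R}^d)$ via the minimal-norm interpolation property and then use disjointly supported bumps. This is a valid, self-contained argument that needs only quasi-uniformity. Two small adjustments make it clean: place the $j$ bumps of width $\asymp j^{-1/d}$ inside a fixed ball contained in $\Omega$ rather than in arbitrary cells, so that boundary cells with few design points cause no trouble; and switch to single-point bumps of radius $\asymp n^{-1/d}$ when $j>cn$.

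\textbf{What each route buys.} Your upper-bound argument adds transparency, but that is the half that is never used later. The proof of Theorem~\ref{TheoAsymptotics} needs only the lower bound, both for the variance sum $\sum_{j\le k_n}(1+\lambda_n\gamma_{j,n})^{-2}$ and for $\gamma_{k_n+1,n}\gtrsim k_n^{2m/d}$ in the truncation bias. For the lower bound you cite the same result as the paper, so the net gain in self-containedness is small.

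\textbf{On your lower-bound sketch.} It is the right mechanism, with one simplification. Only the bound $\|f\|_n\lesssim\|f\|_{L^2(\Omega)}+h^m|f|_{H^m(\Omega)}$, with absorption for small $c$, is needed. The goal is to reach $|f|^2_{H^m(\Omega)}\lesssim c\,j^{2m/d}\|f\|^2_{L^2(\Omega)}$ on $V$ and then contradict the continuous Weyl bound. The sampling inequality in the other direction plays no role.
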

\begin{proof}
For each $n$ there exists a cardinal spline basis $\{c_{j,n}\}_{j=1}^n$ with design matrix $\Imat_n$ and penalty matrix $\Kmat_{card}$.
From \citet[Theorem 5.3]{utreras_convergence_1988} it follows that  conditions A2) and A3) are sufficient to have the bounds~\eqref{eigenvalueBounds} for the eigenvalues of the matrix $n\Kmat_{card}$. Thus, we only need to show that the eigenvalues of $n\Kmat_{card}$ coincide with the $\gamma_{j,n}$. To this end, apply a change of basis with transition matrix $\Phimat_n^t/n$ to the Demmler-Reinsch basis $\{\varphi_{j,n}\}_{j=1}^n$. The resulting basis has design matrix $\Phimat_n\Phimat_n^t/n=\Imat_n$ and is thus the cardinal spline basis. The corresponding penalty matrix is $\Kmat_{card}=(\Phimat_n/n)\Gammamat_n(\Phimat_n^t/n)$. Hence $n\Kmat_{card}=(\Phimat_n/\sqrt{n})\Gammamat_n(\Phimat_n^t/\sqrt{n})$. On the right hand side we recognize an eigendecomposition so that $n\Kmat_{card}$ has indeed eigenvalues $\gamma_{j,n}$.
\end{proof}
    
\begin{lemma}[Roughness bound]\label{LemmaBoundedRoughness}
   There exists a constant $K$ such that 
\begin{align*}
    (\betavec_0^{(n)})^t\Gammamat_n\betavec_0^{(n)}\leq K \|f_0\|_{\Omega}^2,
\end{align*} where $\|f_0\|^2_\Omega=\sum_{|\nuvec|\leq m}\int_{\Omega} \{D^{\nuvec} f(\xvec)\}^2 d\xvec$ is the squared Sobolev-norm of $f_0$.
\end{lemma}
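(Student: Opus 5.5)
The plan is to rewrite the quadratic form in terms of the roughness of the interpolant of $f_0$ at the design points, and then bound that roughness by the Sobolev norm of $f_0$. Let $s_n\in\mathcal{S}_n$ be the natural thin plate spline interpolant of $f_0$, i.e.\ $s_n(\xvec_i)=f_0(\xvec_i)$ for $i=1,\dots,n$. It exists and is unique by A3), since interpolation in $\mathcal{S}_n$ is always uniquely solvable under unisolvence of the $\xvec_i$ for $\mathcal{P}_{m-1}$.

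First I expand $s_n$ in the Demmler-Reinsch basis as $s_n=\sum_{j=1}^n b_j\varphi_{j,n}$. Evaluating at the design points gives $\fvec_0^{(n)}=\Phimat_n \bm{b}$. Because $\Phimat_n^t\Phimat_n/n=\Imat_n$ by the design orthonormality in Definition~\ref{Def1}, it follows that $\bm{b}=\Phimat_n^t\fvec_0^{(n)}/n=\betavec_0^{(n)}$. Using the $\mathcal{B}_m$-orthogonality of the basis, I then get
\begin{align*}
(\betavec_0^{(n)})^t\Gammamat_n\betavec_0^{(n)}=\sum_{j=1}^n \gamma_{j,n} b_j^2=\mathcal{B}_m(s_n,s_n)=J_m(s_n).
\end{align*}
This reduces the claim to bounding $J_m(s_n)$.

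Next I use the minimum norm property of thin plate spline interpolation (Duchon, Meinguet). Among all functions in $\text{BL}^m(\mathds{R}^d)$ that interpolate the data $f_0(\xvec_i)$, the natural spline $s_n$ minimizes $J_m$. One proves this by showing $\mathcal{B}_m(s_n,g-s_n)=0$ for any such interpolant $g$: the function $g-s_n$ vanishes at all $\xvec_i$, and $\mathcal{B}_m(s_n,h)=c\sum_j\delta_j h(\xvec_j)$ for $h\in\text{BL}^m(\mathds{R}^d)$, using $\Tmat^t\deltavec=\nullvec$ and the fact that the $e_j$ are fundamental solutions of $\Delta^m$. Hence $J_m(s_n)\le J_m(g)$ for every admissible interpolant $g$.

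It remains to exhibit a good competitor $g$. Since $\Omega$ is a bounded Lipschitz domain, the Stein (or Calder\'on) extension theorem provides a bounded operator $\mathcal{E}:H^m(\Omega)\to H^m(\mathds{R}^d)\subset\text{BL}^m(\mathds{R}^d)$. By A3) the design points lie in $\Omega$, so $g=\mathcal{E}f_0$ interpolates $f_0$ at every $\xvec_i$. Then
\begin{align*}
J_m(s_n)\le J_m(\mathcal{E}f_0)\le C\|\mathcal{E}f_0\|^2_{H^m(\mathds{R}^d)}\le K\|f_0\|_\Omega^2,
\end{align*}
where $C$ depends only on $m$ and $d$ through the multinomial weights, and $K$ is independent of $n$. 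Continuity of $f_0$, needed for point evaluation, holds because $m>d/2$.

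I expect the main obstacle to be making the minimum norm step rigorous, that is, the identity $\mathcal{B}_m(s_n,h)\propto\sum_j\delta_j h(\xvec_j)$ for general $h$ in the Beppo Levi space. Rather than reprove it, I would cite \citet{meinguet1979multivariate} or \citet{wahba_spline_1990} for this fact, and likewise cite the standard Sobolev extension theorem for the last step.
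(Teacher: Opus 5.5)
Your proposal is correct and follows the paper's argument. You identify $\betavec_0^{(n)}$ as the coefficient vector of the natural thin plate spline interpolant, invoke Meinguet's minimum-norm interpolation property, and compare with an extension of $f_0$ to $\text{BL}^m(\mathds{R}^d)$ whose roughness is bounded by $K\|f_0\|_\Omega^2$. The only differences are that you use a bounded Stein extension operator into $H^m(\mathds{R}^d)$ where the paper uses Duchon's minimal-seminorm extension (Duchon 1978, Lemma 3.1), which is equally valid since any admissible interpolant is a competitor, and that you make explicit the identity $(\betavec_0^{(n)})^t\Gammamat_n\betavec_0^{(n)}=J_m(s_n)$, which the paper leaves implicit.
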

\begin{proof}
From \citet[][Lemma 3.1]{duchon1978erreur} it follows that $f_0$ has a unique extension $f_0^\Omega$ to the Beppo Levi space $\text{BL}^m(\mathds{R}^d)$. This extension is the solution to the problem
\begin{align*}
    \underset{u\in \text{BL}^m(\mathds{R}^d), u|_\Omega=f}{\min} J_m(u),
\end{align*}
and there exists a constant $K>0$ such that 
\begin{align}\label{ExtensionBound}
    J_m(f_0^\Omega)\leq K \|f_0\|_\Omega^2.
\end{align}
It holds $\Phimat_n\betavec^{(n)}_0=\Phimat_n\Phimat_n^t\fvec_0^{(n)}/n=\fvec_0^{(n)}$, which implies that $\betavec_0^{(n)}$ is the coefficient vector of an interpolating natural thin plate spline. From~\cite{meinguet1979multivariate} it follows that the interpolating natural thin plate spline is unique and the smoothest function in $\text{BL}^m(\mathds{R}^d)$ passing through $\fvec_0^{(n)}$. Thus, $(\betavec_0^{(n)})^t\Gammamat_n\betavec_0^{(n)}\leq J_m(f_0^\Omega)$ and with~\eqref{ExtensionBound} the statement follows. 
\end{proof}

Next, we prove Theorem~\ref{TheoAsymptotics}.
\begin{proof} 
We decompose the error into the sum of a variance and a bias term
\begin{align*}
    E_0\|\widehat{f}^{k_n}_{\lambda_n}-f_0\|^2_n &=
    E_0\|\widehat{f}^{k_n}_{\lambda_n}-E_0\widehat{f}^{k_n}_{\lambda_n}+E_0\widehat{f}^{k_n}_{\lambda_n}-f_0\|^2_n=
    E_0\|\widehat{f}^{k_n}_{\lambda_n}-E_0\widehat{f}^{k_n}_{\lambda_n}\|^2_n+ \| E_0 \widehat{f}^{k_n}_{\lambda_n}-f_0\|^2_n.
\end{align*}
The variance term is $$E_0\|\widehat{f}^{k_n}_{\lambda_n}-E_0\widehat{f}^{k_n}_{\lambda_n}\|^2_n=E\|\Smat^{k_n}_{\lambda_n}\epsilonvec^{(n)}\|^2/n=\sigma^2/n\ \text{tr}(\{\Smat^{k_n}_{\lambda_n}\}^2)=\sigma^2/n \sum_{j=1}^{k_n} 1/(1+\lambda_n \gamma_{j,n})^2=O(\lambda_{n}^{-d/(2m)} /n),$$ where we used Lemma~\ref{LemmaEigenvalueBounds} and Lemma 9.1 of \citet[][]{gu_smoothing_2013} for the last equation. Next, we decompose the bias term into two summands
\begin{align*}
    \|E_0\widehat{f}_{\lambda_n}^{k_n}-f_0\|^2_n=\|\Smat_{\lambda_n}^{k_n}\fvec_0^{(n)}-\fvec_0^{(n)}\|^2/n=\sum_{j=1}^{k_n} \{1-1/(1+\lambda_n \gamma_{j,n})\}^2(\beta_{0,j}^{(n)})^2+\sum_{j=k_n+1}^n (\beta_{0,j}^{(n)})^2.
\end{align*}
The first summand is $O(\lambda_n)$ using that $x/(1+x)^2\leq 1/4$ for $x>0$ and that $(\betavec_0^{(n)})^t\Gammamat_n\betavec_0^{(n)}=O(1)$, which follows directly from Lemma~\ref{LemmaBoundedRoughness} and the assumption that $f_0$ is Sobolev-regular of order $m$. For the second summand it holds 
    $\gamma_{k_n+1,n} \sum_{j=k_n+1}^n (\beta_{0,j}^{(n)})^2 \leq (\betavec_0^{(n)})^t\Gammamat_n \betavec_0^{(n)} = O(1).$
Thus, using Lemma~\ref{LemmaEigenvalueBounds} the term is $O(k_n^{-2m/d})$. Altogether we thus have $$E_0\|\widehat{f}^{k_n}_{\lambda_n}-f_0\|^2_n=O(\lambda_n^{-d/(2m)}/n)+O(\lambda_n)+O(k_n^{-2m/d}).$$ For $\lambda_n\asymp n^{-2m/(2m+d)}$ and $k_n\asymp n^\delta$ with $\delta\in[d/(2m+d),n]$ we have $E_0\|\widehat{f}^{k_n}_{\lambda_n}-f_0\|^2_n=O(n^{-(2m)/(2m+d)})$.
\end{proof}

\subsection{Proof of Proposition 2 (Reproducing kernel)}

\begin{proof}\label{sec:ProofReproducing}
\begin{enumerate}[i)]
    \item Reproducing property: Let $f\in \mathcal{P}_{m-1}^\perp$ and $\widetilde{\xvec}\in \mathds{R}^d$. Since $\{z_j\}_{j=1}^{n-M}$ is a basis of $\mathcal{P}^\perp_{m-1}$, there exists a vector $\betavec\in \mathds{R}^{M-n}$ such that $f(\cdot)=\zvec(\cdot)^t\betavec$. 
    With this
    \begin{align*}
\mathcal{B}_m(k(\cdot,\widetilde{\xvec}),f(\cdot))=&\mathcal{B}_m(\zvec(\cdot)^t\Kmat_z^{-1}\zvec(\widetilde{\xvec}),\zvec(\cdot)^t\betavec)
=\zvec(\widetilde{\xvec})^t\Kmat_z^{-1}\Kmat_z\betavec=\zvec(\widetilde{\xvec})^t\betavec=f(\widetilde{\xvec}),
    \end{align*}
    where we used bilinearity of $\mathcal{B}_m$ and that $\Kmat_z=\{\mathcal{B}_{m}(z_j,z_l)\}$. 
    \item  Let $\Qmat_0$ be a matrix of size $n\times (n-M)$ whose columns form an orthonormal basis of the null space $ker(\Tmat^t)=span(\Tmat)^\perp\subset \mathds{R}^n$. 
With this, the orthogonal projector $\Pmat=\Imat_n-\Tmat(\Tmat^t\Tmat)^{-1}\Tmat^t$ has the form $\Pmat=\Qmat_0\Qmat_0^t$. Now consider the basis of $\mathcal{P}_{m-1}^\perp$ with design matrix $\Zmat=\Pmat\Emat\Qmat_0$ and penalty matrix $\Kmat_z=\Qmat_0^t\Emat\Qmat_0$. For this basis we have 
$$\Kmat=\Zmat\Kmat_z^{-1}\Zmat^t=\Pmat\Emat\Qmat_0 (\Qmat_0^t\Emat\Qmat_0)^{-1}\Qmat_0^t\Emat\Pmat=\Qmat_0\Qmat_0^t\Emat\Qmat_0(\Qmat_0^t\Emat\Qmat_0)^{-1}\Qmat_0^t\Emat\Pmat=\Qmat_0\Qmat_0^t\Emat\Pmat=\Pmat\Emat\Pmat.$$
\item Let $\{\varphi_j\}_{j=1}^n$ be a Demmler-Reinsch basis of $\mathcal{S}_n$ and let $\varphivec_j,\ j=1,\dots,n,$ denote the corresponding vectors of function evaluations, i.e., $\varphivec_j=(\varphi_j(\xvec_1),\dots,\varphi_j(\xvec_n))^t,\ j=1\dots,n$.
 In terms of $\{\varphi_j\}_{j=1}^n$, the kernel matrix has the form $\Kmat=\sum_{j=M+1}^n \varphivec_j\varphivec_j^t \frac{1}{\gamma_j}$, which implies that $\Kmat\varphivec_{j_0}=n/\gamma_{j_0}\varphivec_{j_0},\ j_0=M+1,\dots,n$.
Hence, the vectors of function evaluations of the nonpolynomial basis functions $\{\varphi_j\}_{j=M+1}^n$ are eigenvectors of $\Kmat$ with eigenvalues $\lambda_j=n/\gamma_j,\ j=M+1,\dots,n$.

It remains to show the other direction, i.e., that we can also construct the nonpolynomial basis functions of a Demmler-Reinsch basis through an eigendecomposition of $\Kmat$. 
To this end, compute an eigendecomposition $$\Zmat\Kmat_z^{-1}\Zmat^t=\Umat_{n-M}\Lambdamat_{n-M} \Umat_{n-M}^t.$$ Multiplying both sides by $\Umat\Lambdamat_{n-M}^{-1}\sqrt{n}$, this implies $\Zmat\Kmat_z^{-1}\Zmat^t\Umat\Lambdamat_{n-M}^{-1}\sqrt{n}=\sqrt{n}\Umat$. This motivates to consider the transition matrix $\Amat=\Kmat_z^{-1}\Zmat^t\Umat\Lambdamat_{n-M}^{-1}\sqrt{n}$ and the new basis of $\mathcal{P}^\perp_{m-1}$ with design matrix $\widetilde{\Zmat}=\Zmat\Amat$. This is an orthonormal basis since $$\widetilde{\Zmat}^t\widetilde{\Zmat}=\Amat^t\Zmat^t\Zmat\Amat=n\Umat^t\Umat=n\Imat_n.$$ In addition, the corresponding roughness penalty matrix is diagonal since $$\Amat^t\Kmat_z \Amat=\sqrt{n}\Lambdamat_{n-M}^{-1}\Umat^t\Zmat\Kmat_z^{-1}\Kmat_z\Kmat_z^{-1}\Zmat^t\Umat\Lambdamat_{n-M}^{-1}\sqrt{n}=n\Lambdamat_{n-M}^{-1}.$$ Hence, combining $\widetilde{\Zmat}$ with an orthonormal basis of the polynomials $\mathcal{P}_{m-1}$ we have a Demmler-Reinsch basis.
\end{enumerate}
\end{proof}

\subsection{Proof of Proposition 3 (Bayesian optimality)}\label{sec:ProofBayesianOptimality}
\begin{proof} 
Let $\mathcal{U}$ be a $(k-M)$-dimensional subpace of $\mathcal{P}_{m-1}^\perp$ and let $H_\mathcal{U}f=\text{argmin}_{u\in \mathcal{U}} \|f-u\|^2_n$ be the least squares projection onto $\mathcal{U}$. Let further $\Hmat_{\mathcal{U}}$ be the corresponding $n\times n$ projection matrix mapping $(f(\xvec_1),\dots,f(\xvec_n))^t$ onto $((H_{\mathcal{U}}f)(\xvec_1),\dots,(H_{\mathcal{U}}f)(\xvec_n))^t$.
Then it holds:
\begin{align}\label{ProjectionOrthogonality}
   0\leq  E\|f_{gp}-H_{\mathcal{U}}f_{gp}\|_n^2= E\|f_{gp}\|_n^2-2E(f_{gp},H_{\mathcal{U}}f_{gp})_n+E\|H_{\mathcal{U}}f_{gp}\|_n^2=E\|f_{gp}\|_n^2-E\|H_{\mathcal{U}}f_{gp}\|_n^2,
\end{align}
where we used that $(f_{gp},H_{\mathcal{U}}f_{gp})_n=(H_{\mathcal{U}}f_{gp},H_{\mathcal{U}}f_{gp})_n=\|H_{\mathcal{U}}f_{gp}\|_n^2$ for the last equality. This follows from the fact that $(f_{gp}-H_{\mathcal{U}}f_{gp},u)_n=0$, $u\in \mathcal{U}$. Equation \eqref{ProjectionOrthogonality} shows that $E\|H_{\mathcal{U}} f_{gp}\|_n^2\leq E\|f_{gp}\|_n^2$. By the same arguments as above we obtain 
\begin{align}\label{ProjectionOrthogonality2}
   0\leq  E\|f_{gp}-H_{\mathcal{U}^\ast}f_{gp}\|_n^2=E\|f_{gp}\|_n^2-E\|H_{\mathcal{U}^\ast}f_{gp}\|_n^2.
\end{align}
From~\eqref{ProjectionOrthogonality} and~\eqref{ProjectionOrthogonality2} we obtain
     $$E\|f_{gp}-H_{\mathcal{U}^\ast}f_{gp}\|_n^2\leq  E\|f_{gp}-H_{\mathcal{U}}f_{gp}\|_n^2 \iff E\|H_{\mathcal{U}}f_{gp}\|_n^2 \leq  E\|H_{\mathcal{U}^\ast}f_{gp}\|_n^2.$$
Hence, to prove i), it suffices to prove ii). 

By orthonormality of the Demmler-Reinsch basis functions the projection onto $\mathcal{U}^\ast$ can be expressed in the form $H_{\mathcal{U}^\ast}f_{gp}=\sum_{j=M+1}^k \varphi_j (\varphi_j,f_{gp})_n$.
From Proposition~\ref{PropositionConstructionDRB} we know that $\Kmat\varphivec_j=n/\gamma_j\varphivec_j=\lambda_j\varphivec_j$. Thus,  $(\varphi_j,f_{gp})_n=\varphivec_j^t\fvec_{gp}/n\sim N(0,\tau^2 \varphivec_j^t\Kmat\varphivec_j/n^2)=N(0,\tau^2 1/\gamma_j)$. This proves part iii). It remains to prove part ii). To this end, note first that
\begin{align*}
E\|H_{\mathcal{U}^\ast}f_{gp}\|_n^2=\sum_{j=M+1}^k E(\varphi_j,f_{gp})_n^2=\tau^2 \sum_{j=M+1}^k 1/\gamma_j=\tau^2/n\  \text{tr}(\Lambdamat_{k-M}).
\end{align*}
Obviously, $E\|f_{gp}\|_n^2=E\fvec_{gp}^t\fvec_{gp}/n= \tau^2 \text{tr}(\Kmat)/n$ and $
    E\|H_{\mathcal{U}}f_{gp}\|_n^2 = E\fvec_{gp}^t\Hmat_\mathcal{U}\fvec_{gp}/n= \tau^2 \text{tr}(\Hmat_\mathcal{U}\Kmat)/n.$
By \citet[][Theorem 1]{fan1949theorem} it holds $\text{tr}(\Hmat\Kmat)\leq \text{tr}(\Lambdamat_{k-M})$ for all projections $\Hmat$ onto $(k-M)$-dimensional subspaces of $\mathds{R}^n$ so that $E\|H_{\mathcal{U}}f_{gp}\|_n^2 \leq  E\|H_{\mathcal{U}^\ast}f_{gp}\|_n^2$. Since $\text{tr}(\Lambdamat_{k-M})\leq \text{tr}(\Lambdamat_{n-M})=\text{tr}(\Kmat)$, we also have 
$E\|H_{\mathcal{U}^\ast}f_{gp}\|_n^2\leq E\|f_{gp}\|_n^2$, which finishes the proof.
\end{proof}
 
\subsection{Construction of the TDRB smoother for replicated inputs}\label{sec:Algo2}
In practice often replicated inputs occur. While Algorithm~\ref{algo1} can still be used to construct the TDRB smoother, a more efficient approach is described below. 
With this, one can reduce the computational cost to $O(n^2_{unique}k)$ instead of $O(n^2k)$, where $n_{unique}$ is the number of unique inputs.
\begin{itemize}
    \item Given inputs $\widetilde{\xvec}_i,\ i=1,\dots,n,$ with replicates find the unique inputs $\xvec_i,\ i=1,\dots,n_{unique}$.
     \item Set up the diagonal weight matrix $\Wmat=\text{diag}(w_i)$, where $w_i=n_i/n$ for $i=1,\dots,n_{unique}$. Thereby, $n_i$ is the multiplicity of $\xvec_i$ in $\widetilde{\xvec}_i$ such that $\sum_{i=1}^{n_{unique}} w_i=1$.
    \item Apply Algorithm~\ref{alg:two} below to set up the TDRB smoother for the unique inputs $\xvec_i$.
    \item Replicate the corresponding rows of $\Phimat_k$ to obtain $\widetilde{\Phimat}_k$, the design matrix for $\widetilde{\xvec}_i$. With this, we have $\Phimat_k^t\Wmat\Phimat_k=\Imat_k$ and $\widetilde{\Phimat}_k^t\widetilde{\Phimat}_k=n\Imat_k$.
\end{itemize}

\begin{algorithm}[H]
\caption{Construction of the TDRB smoother with computational cost $O(n^2_{unique}k)$.}\label{alg:two}
~
\begin{enumerate}
\item Set up the polynomial design matrix $\Tmat$ and the RBF design matrix $\Emat$.
    \item Compute a QR decomposition of $\sqrt{\Wmat}\Tmat$, that is $\sqrt{\Wmat}\Tmat=\Qmat\Rmat$, and set $\Qmat_{w}=\Wmat^{-1/2}\Qmat$.
    \item Compute the kernel matrix $\Kmat=\Pmat_{w}\Emat\Pmat_{w}^t$ efficiently, where $\Pmat_{w}=\Imat_{n_{unique}}-\Qmat_w\Qmat_w^t\Wmat$.
    \item Compute a truncated eigendecomposition of $\sqrt{\Wmat}\Kmat\sqrt{\Wmat}\approx \Umat_{k-M}\Lambdamat_{k-M}\Umat_{k-M}^t$.
    \item Compute the transition matrix $\Amat_k=\begin{pmatrix}
   \Rmat^{-1} & -\Rmat^{-1}\Qmat_w^t \Wmat\Emat\Pmat_w^t\sqrt{\Wmat}\Umat_{k-M}\Lambdamat_{k-M}^{-1}\\ \nullmat& \Pmat_w^t\sqrt{\Wmat}\Umat_{k-M}\Lambdamat_{k-M}^{-1}
\end{pmatrix}$.
\item Return the design matrix $\Phimat_k=(\Qmat_w,\Wmat^{-1/2}\Umat_{k-M})$, the penalty matrix $\Gammamat_k=\text{bdiag}(\nullvec_M,\Lambdamat_{k-M}^{-1})$, the transition matrix $\Amat_k$ as well as $\text{tr}(\Lambdamat_{k-M})/\text{tr}(\sqrt{\Wmat}\Kmat\sqrt{\Wmat})$.
\end{enumerate}
\end{algorithm}
\end{document}